\newtheorem{theorem}{Theorem}
\newtheorem{Lemma}{Lemma}
\newtheorem{Corollary}{Corollary}
\journal{Computational Statistics \& Data Analysis}
\begin{document}

\begin{frontmatter}

\title{Exact Designs for Order-of-Addition
Experiments Under a Transition-Effect
Model}

\author{Jiayi Zheng, Nicholas Rios} 

\affiliation{organization={George Mason University},
            addressline={4400 University Drive}, 
            city={Fairfax},
            postcode={22030}, 
            state={VA},
            country={United States}}

\bigskip
\begin{abstract}
In the chemical, pharmaceutical, and food industries, sometimes the order of adding a set of components has an impact on the final product. These are instances of the order-of-addition (OofA) problem, which aims to find the optimal sequence of the components. Extensive research on this topic has been conducted, but almost all designs are found by optimizing the $D-$optimality criterion. However, when prediction of the response is important, there is still a need for $I-$optimal designs. A new model for OofA experiments is presented that uses transition effects to model the effect of order on the response, and the model is extended to cover cases where block-wise constraints are placed on the order of addition. Several algorithms are used to find both $D-$ and $I-$efficient designs under this new model for many run sizes and for large numbers of components. Finally, two examples are shown to illustrate the effectiveness of the proposed designs and model at identifying the optimal order of addition, even under block-wise constraints. 

\end{abstract}

\begin{keyword}
Design of Experiments \sep Transition-Effect Model \sep I-optimality \sep Simulated Annealing

\end{keyword}

\end{frontmatter}

\newpage
\section{Introduction}
\label{sec:intro}
In the chemical industry, the process to output a product typically requires several components to be added. The order of adding the components often influences the quality of the output, and searching for the optimal order is of concern. This problem is known as the Order of Addition (OofA) problem. For example, as suggested in \citet{chandrasekaran2006}, the order of different alcohols affects the  synthesis of disubstituted carbonates. Similarly, in the field of engineering, \citet{joiner1976} found that the calibration order of different cryogenic flows influences the precision of the cryogenic meter due to its sensitivity to temperature changes and the cost of the overall experiment. In social sciences, \citet{miller1998} found that candidate name order could have an impact on election results. In job scheduling problems, the order of performing jobs directly affects the total job cost \citet{zhao2021}. In \citet{YangSun2020}, the order of four drugs was varied to study the impact of drug order on the treatment of lymphoma.  Overall, the OofA problem is becoming a multi-disciplinary topic requiring further research.

In the OofA problem, when the number of components is small, it is easy to test all possible permutations. However, the experiments become extremely expensive even with a slight increase in the number of components. If there are $m$ components in total, then the number of all possible permutations is given by $m!$. For example, when there are six components, there are $6!=720$ permutations in total. In this article, we denote a single permutation of $m$ components by $\textbf{a} = (a_1, \dots, a_m)$, and we denote the set of all $m!$ permutations by $D_m$.

How should an experiment be designed when it is impossible to test all the permutations? This article focuses on using the $D-$ and $I-$optimality criteria to choose an experimental design. While $D-$optimal OofA designs have been studied extensively, the $I-$optimality criterion has not received much attention in the OofA literature. An experimental design is a subset of the $m!$ orderings to test. The $I-$optimality criterion minimizes the integrated prediction variance over all feasible permutations of the $m$ components. Much of the existing literature on OofA experiments focuses on the $D-$optimality criterion, which produces designs that aid with parameter estimation. For example, \citet{lin2019order} showed a systematic method for constructing fractional $D-$optimal OofA designs for the pairwise order model, and \citet{chen2020construction} proposed a modified construction method for $D-$optimal designs based on block designs. \citet{mee2020order} examined the $D-$efficiency of a triplets model that allowed for interactions between pairwise order terms. 

The focus on $I-$optimality is of particular interest when there are constraints on the order of addition that limit the number of possible orders to examine. In real applications, there are often constraints on the order of some components. For example, in Solvay process of chemical industry, which is used to produce sodium carbonate ($\text{Na}_2\text{CO}_3$), the addition of $\text{NH}_3$-concentrated solution must be added before $\text{CO}_2$. According to \citet{WangLi2019}, the industry requires the concentrated solution to be ammoniated with the brine (NaCl) solution first. In such cases, a constrained order-of-addition problem arises, with pairwise constraints placed on the orders of addition. Another example is given by scheduling problems, where certain tasks may need to be completed before others can begin. In this paper, we will examine cases where the components are grouped into blocks, and the blocks must be executed in a known order. 

The rest of this paper is organized as follows. Section \ref{sec:prelims} will review existing methods or the OofA problem. Then, Section \ref{sec:methods} will propose a new OofA model based on transitions between components in a permutation, and also explain the algorithms we use in the article to find exact designs with high $I-$efficiency and $D-$efficiency. Section \ref{sec:simuresults} will compare the performance of these algorithms at finding highly efficient designs under these criteria when the number of components $m$ is large, and traditional exchange algorithms are inefficient. Finally, Section \ref{sec:example} will provide an example to illustrate the implementation of the proposed methods.

\section{Preliminaries}
\label{sec:prelims}
In order to discuss the proposed methodology, it is helpful to first review previous research results.
\subsection{PWO model by Van Nostrand}
Multiple models have been proposed for the OofA problem. For example, the component-position model focuses on the absolute positions of the components, which was suggested by \citet{YangSun2020}. The pairwise order (PWO) model was first suggested in \citet{van1995}. Consider an experiment when there are $m$ components, where $m$ is any positive natural number. Label the components with numbers from one to $m$. Then for any pair of components $i,\ j$, where $1\leq i< j\leq m$, define the PWO factor $x_{ij}$ as 
\begin{align}
    x_{ij}(\textbf{a})=
\begin{cases}
    1\ \ \ \text{if component $i$ is added before $j$ in $\textbf{a}$}\\
    -1\ \ \ \text{if component $i$ is added after $j$ in $\textbf{a}$}\\
\end{cases}
\end{align}

For example, suppose $m=3$. In this case, there would be six possible permutations in total, i.e. $D_3 = \{(1,2,3),\ (1,3,2),\ (2,1,3),\ (2,3,1),\ (3,1,2),\ (3,2,1)\}$. $D_3$ can be written as a $6\times3$ matrix, which can be seen in the left side of Table \ref{tab1}.\\
When $\textbf{a}=(1,3,2)$, $x_{12} = 1$ and $x_{23} = -1$. Meanwhile, when $\textbf{a}=(3,1,2)$, $x_{13} = x_{23} = -1$. The PWO model can be written as 
\begin{align}
    E(y|\textbf{a})=\beta_0+\sum_{i=1}^{m-1}\sum_{j=i+1}^m\beta_{ij}x_{ij}(\textbf{a})
    \label{eqn:PWOmodel}
\end{align}
This model is denoted as the \textit{main effects PWO model} in \citet{voelkel2019}, and as the \textit{simple pairwise model} in \citet{mee2020order}. Optimal designs for the PWO model were constructed by \citet{OOADesign}, who also considered a tapered PWO model. There are other versions of the model, such as the \textit{triplets order-of-addition model} discussed in \citet{mee2020order}, and the \textit{special cubic Scheffe model} in \citet{becerra2021bayesian} which both consider the triple interaction of components' orders. In this article, only the \textit{simple pairwise model} is discussed, and it is denoted as the PWO model. There are $\binom{m}{2}$ PWO factors in total in Model (\ref{eqn:PWOmodel}). Model (\ref{eqn:PWOmodel}) can be rewritten as $ E(y|\textbf{a})=\Tilde{x}^T\Tilde{\beta} $, where as
$$\Tilde{x}=\begin{pmatrix}
1\\
x_{12}(\textbf{a}) \\
x_{13}(\textbf{a}) \\
\vdots\\
x_{(m-1)m}(\textbf{a})
\end{pmatrix}\ \ \Tilde{\beta}=\begin{pmatrix}
\beta_0\\
\beta_{12} \\
\beta_{13} \\
\vdots \\
\beta_{(m-1)m}
\end{pmatrix}$$
From here it can be seen that the PWO model is a special case of multiple linear regression model. The smallest
number of observations required to fit the model is $\binom{m}{2}+1$. A matrix $X$ can be created based on the vectors of $\Tilde{x}$, i.e. $X=(\Tilde{x}_1^T,\Tilde{x}_2^T,\dots,\Tilde{x}_{m!}^T)$. This $X$ matrix is used to calculate different optimality criteria and therefore determine the optimal design. 
\begin{table}[!h]
\begin{center}
\begin{tabular}{||c c c c c c||} 
 \hline
 \multicolumn{3}{||c}{OofA $D_3$} & \multicolumn{3}{c||}{PWO Model $X_3$}\\ [0.5ex] 
 \hline\hline
 & & & $x_{12}$ & $x_{13}$ & $x_{23}$\\
  1 & 2 & 3\ \ \ \ \ \  & 1 & 1 & 1\\ 
  1 & 3 & 2\ \ \ \ \ \   & 1 & 1 & -1\\
  2 & 1 & 3\ \ \ \ \ \   & -1 & 1 & 1\\
  2 & 3 & 1\ \ \ \ \ \   & -1 & -1 & 1\\
  3 & 1 & 2\ \ \ \ \ \   & 1 & -1 & -1\\
  3 & 2 & 1\ \ \ \ \ \   & -1 & -1 & -1\\[1ex]
 \hline
\end{tabular}
\caption{OofA design matrix and PWO model when $m=3$}
\label{tab1}
\end{center}
\end{table}

\subsection{Optimality Criteria}
\label{subsec:preopt}
In previous research, different criteria have been used to determine the best designs for OofA experiments. The most popular is $D-$optimality, which maximizes the determinant of the moment matrix; this was discussed in \citet{becerra2021bayesian}, \citet{sambo2014}, \citet{li2021efficient}. Similarly, other optimality criteria have been examined, such as
$Q-$optimality, $A-$optimality in \citet{li2021efficient}, $G-$optimality, and $EI-$optimality in \citet{li2021efficient}. Let $X_f$ be the matrix of PWO variables for every feasible permutation. Let $X_n$ be a matrix of $n$ rows from $X_f$. A commonality among the criteria is that they are based on the moment matrix $M$, defined as $M=X_n^TX_n/n$, where $n$ is the number of runs in a PWO design. For parameter estimation, we focus on $D-$optimality, i.e., $|M|^{1/q}$, where $q$ is the number of columns of $M$, where larger values of the $D-$optimality criterion are preferred. For prediction variance, we choose to focus in minimizing the $I-$optimality criterion, which minimizes the average prediction variance over an experimental region. \citet{goos2016optimal} noted that $G-$optimal designs seek to minimize the maximum prediction variance over the experimental region. Since $G-$optimal designs only minimize the maximum prediction variance, this article focuses on $I-$optimality instead, also known as $V-$optimality in \citet{goos2014}. \citet{becerra2021bayesian} suggested that the $I-$optimality criterion has advantages compared to other criterion, as it focuses on obtaining precise results. This criterion aims at minimizing the average variance in our predicted response over all feasible permutations under a particular set of pairwise constraints. Let the $I-$optimality criterion be $I= \text{trace}(M^{-1}X_f^TX_f)$. Smaller values of $I$ indicate better designs. 

In the OofA literature, $I-$optimal designs have been previously examined for main effect models, but not much attention has been given to designs for models with higher-order interactions. Notably, \citet{schoen2023order} showed that any design that is an orthogonal array of strength 2 in $m$ components is an $I-$optimal design for the main effects PWO model. They did examine the relative $D-$efficiency of certain orthogonal arrays in a model with interaction terms, but the $I-$efficiency has yet to be investigated. Also, these orthogonal arrays do not exist in all sizes; for example, \citet{schoen2023order} acknowledge that OofA orthogonal arrays of strength 2 only exist for run sizes that are a multiple of 12. It is important to develop flexible methods that allow arbitrary run sizes to accommodate for constrained experimental budgets.

Denote the $I-$optimality of the full constrained design matrix as $I_{full}$, and the $D-$optimality of the full design matrix as $D_{full}$. For the design matrix corresponding to a subset of the full design with an $I-$optimality of $I_{reduced}$, one can find the relative $I-$efficiency as $\frac{I_{full}}{I_{reduced}}$. The relative $I-$efficiency is written this way so that higher relative efficiencies are better for the reduced design. For $D-$optimality (since larger values are better), the relative $D-$efficiency is $\frac{D_{reduced}}{D_{full}}$.

\section{Proposed Methods}
\label{sec:methods} 

\subsection{A Transition Effect (TE) Model}

Suppose there are $m$ components to be sequentially added. Let $\textbf{a} = (a_1, a_2, \dots, a_m)$ be a permutation of $(1,\dots,m)$. Let $\tau(\textbf{a})$ be the expected response the order $\textbf{a}$. Model (\ref{eqn:oofamodel}) is proposed for the OofA experiment, which is given by \begin{align}
\label{eqn:oofamodel}
    \tau( \textbf{a}) =  \beta_0 + \sum_{j = 1}^m \sum_{k = 1, k \neq j}^m \beta_{j,k} x_{j,k}(\textbf{a}),
\end{align} where $x_{(j,k)}(\textbf{a}) = 1$ if component $k$ directly follows component $j$ in $\textbf{a}$, and 0 otherwise. In model (\ref{eqn:oofamodel}), $\beta_{j,k}$ represents the effect of adding component $k$ directly after adding component $j$ on the expected response. We refer to this as the ``transition effect'' from component $j$ to $k$. As a small example, suppose $m = 3$ and $\textbf{a} = (3,1,2)$. Then $\tau(\textbf{a}) = \beta_0 + \beta_{3,1} + \beta_{1,2}$. These coefficients can be estimated using least squares. Constraints are required to estimate the parameters in Model (\ref{eqn:oofamodel}). This is because $\sum_{j=1}^m\sum_{k=1, k \neq j}^m x_{j,k}(\textbf{a}) = m-1$ for every $\textbf{a} \in D_m$, since $m-1$ transitions must occur in a permutation of $m$ components, which creates a linear dependency with the intercept. For this reason, we assume that $\delta_{m,m-1} = 0$. This model has $2{m \choose 2}$ estimable parameters.

Model (\ref{eqn:oofamodel}) assumes that as each component is added, the effect of the permutation on the response is a sum of the effects of direct transitions; i.e., transitions between adjacent components in the permutations. A generalization of this model would be to include effects that represent transitions between components that are further apart. Suppose, for a permutation $\textbf{a} = (a_1, \dots, a_s, \dots, a_u, \dots, a_m)$, that $a_s = j$ and $a_u = k$. Define the transition length from $j$ to $k$ as $d(j,k,\textbf{a}) = u - s$. This transition length is not symmetric; for example, if $\textbf{a} = (3,1,2,4)$, then $d(3,2, \textbf{a}) = 2$, but $d(2,3, \textbf{a}) = -2$. A model that includes transitions of lengths 1 and 2 is \begin{align}
\label{eqn:oofamodel2}
    \tau(\textbf{a}) = \beta_0 + \sum_{j=1}^m\sum_{k=1, k \neq j}^m \Big[ \beta_{j,k}x_{j,k}(\textbf{a}) +  \delta_{j,k}x_{j,k}^{(2)}(\textbf{a})\Big],
\end{align} where $x_{j,k}^{(2)}(\textbf{a}) = 1$ if $d(j,k,\textbf{a}) = 2$, and $x_{j,k}^{(2)}(\textbf{a}) = 0$ otherwise. In this case, $\delta_{j,k}$ represents the effect of the length two transition from component $j$ to $k$. Model (\ref{eqn:oofamodel2}) has $2m(m-1)$ parameters. However, since every permutation of $(1,\dots,m)$ must have $m-2$ transitions of length 2, it follows that $\sum_{j=1}^m\sum_{k = 1, k \neq j}^m x_{j,k}^{(2)}(\textbf{a}) = m - 2$, which would lead to scenarios where the model matrix is not full rank. Therefore, we set $\delta_{m,m-1} = \beta_{m,m-1} = 0$ to ensure identifiability for $m \geq 5$ for this model. 

A useful initial result for constructing optimal designs under the transition effect model is given below in Corollary \ref{cor:fullopt}.

\begin{Corollary}[Full Design is Optimal]
\label{cor:fullopt}
Let $D_m$ be the full design that uses each of the $m!$ orders exactly once. Let $X_m$ be the model matrix expansion of $D_m$ under model (\ref{eqn:oofamodel}) or model (\ref{eqn:oofamodel2}), and let $M_f = (1/m!)X_m^TX_m$. Let $\phi$ be an optimality criterion that is concave and permutation invariant. Then, $\phi(M_f) \geq \phi(M(w))$ for any design measure $w$ over $D_m$.
\end{Corollary}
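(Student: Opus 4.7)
The plan is a symmetrization argument via Jensen's inequality applied to the natural action of the symmetric group $S_m$ on $D_m$. Set up the group action: $S_m$ acts on $D_m$ by relabeling components, so for $\pi \in S_m$ and an order $\textbf{a} = (a_1,\dots,a_m)$, we set $\pi \cdot \textbf{a} = (\pi(a_1),\dots,\pi(a_m))$. The only permutation fixing $\textbf{a}$ is the identity, so this action is free, and by orbit--stabilizer it is transitive on $D_m$.

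Next, I would verify equivariance of the model expansion. Under model (\ref{eqn:oofamodel}), relabeling by $\pi$ sends the parameter $\beta_{j,k}$ to $\beta_{\pi(j),\pi(k)}$ (and analogously for $\delta_{j,k}$ under model (\ref{eqn:oofamodel2})), so there is a permutation matrix $P_\pi$ on the parameter space for which $\tilde{x}(\pi\cdot\textbf{a}) = P_\pi\,\tilde{x}(\textbf{a})$. For any probability measure $w$ on $D_m$, define $w^\pi(\textbf{a}) = w(\pi^{-1}\cdot\textbf{a})$; a change of variables then yields $M(w^\pi) = P_\pi M(w) P_\pi^T$, and the assumed permutation invariance of $\phi$ gives $\phi(M(w^\pi)) = \phi(M(w))$ for every $\pi$. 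Now set $\bar{w} = \tfrac{1}{m!}\sum_{\pi\in S_m} w^\pi$. Because the action is free and transitive, $\bar{w}(\textbf{a}) = \tfrac{1}{m!}$ for every $\textbf{a}\in D_m$, so $\bar{w}$ is uniform on $D_m$ and $M(\bar{w}) = M_f$. Since $w\mapsto M(w)$ is linear and $\phi$ is concave, Jensen's inequality gives
\begin{equation*}
\phi(M_f) \;=\; \phi\!\left(\tfrac{1}{m!}\sum_\pi M(w^\pi)\right) \;\geq\; \tfrac{1}{m!}\sum_\pi \phi(M(w^\pi)) \;=\; \phi(M(w)),
\end{equation*}
which is exactly the claim.

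The main technical subtlety is the identifiability convention $\beta_{m,m-1}=0$ (together with $\delta_{m,m-1}=0$ for model (\ref{eqn:oofamodel2})), which formally breaks the $S_m$-symmetry of the parameter indexing, since the relabeling $P_\pi$ need not preserve the vanishing coordinate. I would sidestep this by working throughout with the full unconstrained model expansion, so that $P_\pi$ genuinely acts on all indices, and by interpreting $\phi$ as a function that depends only on the $S_m$-invariant space of estimable contrasts (equivalently, via the Moore--Penrose pseudoinverse of a rank-deficient $M$). With this reading, equivariance, permutation invariance, and the Jensen step all go through verbatim. The only other routine check is the linearity of $w\mapsto M(w)$, which is immediate from its definition as an expectation of rank-one matrices.
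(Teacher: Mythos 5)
Your proof is correct and follows essentially the same route as the paper's: symmetrize an arbitrary design measure over the relabeling action of $S_m$ on $D_m$, use equivariance of the model expansion (the paper's Lemma 1) together with permutation invariance of $\phi$ to show every summand has the same criterion value as $M(w)$, and apply concavity (Jensen) to conclude that the uniform, i.e.\ full, design is at least as good. Your explicit discussion of the identifiability constraint $\beta_{m,m-1}=0$ (and $\delta_{m,m-1}=0$), which formally breaks the $S_m$-symmetry of the column indexing, addresses a subtlety that the paper's own proof passes over in silence.
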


Corollary \ref{cor:fullopt} shows that the full design, which uses each of the $m!$ possible orders exactly once, is $\phi-$optimal for any criterion $\phi$ that is concave and permutation invariant under the length-one and length-two transition effect models. The method of proving this result is very similar to Theorem 1 of \citet{OOADesign}. To evaluate the quality of a design, it is important to be able to quickly find its $D-$ or $I-$efficiency. Finding the $D-$ or $I-$efficiency of the full design could be done by enumerating all $m!$ permutations of the $m$ components, finding the corresponding moment matrix, and then evaluating the criterion, which involves a costly matrix operation (determinant for $D$, or inverse for $I$). To save time, a closed form of the moment matrix of the full design is derived in Theorem \ref{thm:momentmat1}.

\begin{theorem}[Moment Matrix for the Full Design Under Model (\ref{eqn:oofamodel})]
\label{thm:momentmat1}
Let $D_m$ be the full design that uses all $m!$ permutations only once. Let $X_m$ be the model matrix expansion of $D_m$ under Model (\ref{eqn:oofamodel}), with the constraint that $\beta_{m,m-1} = 0$. Let $q = 2{m \choose 2} - 1$. Then it follows that \begin{align}
    X_m^TX_m/m! = \begin{bmatrix}
        1  & (1/m)\textbf{1}^T  \\
        (1/m)\textbf{1}  & [(m-1)!I_q + (m-2)!V]/m!
    \end{bmatrix},
\end{align} where $\textbf{1}$ is a conformable column vector of 1s, and $V$ is a $q \times q$ matrix whose columns are indexed by the pairs $(1,2), \dots, (1,m), (2,1), \dots, (2,m), \dots, (m,1), \dots, (m,m-2)$, with elements \begin{align*}
    V_{(i,j), (k,\ell)} = \begin{cases}
        0 & \,\, \text{if } i = k \text{ or } j = \ell \text{ or } i = \ell, j = k, \\
        1 & \,\, \text{otherwise}.
    \end{cases}
\end{align*}
\end{theorem}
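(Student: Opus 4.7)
The plan is to compute the blocks of $X_m^T X_m$ directly by counting how many permutations in $D_m$ contribute to each inner product, then normalize by $m!$. The $(1,1)$ entry is immediate: the intercept column has $m!$ ones, giving $1$ after normalization. For the intercept-transition blocks I need $\sum_{\textbf{a}\in D_m} x_{j,k}(\textbf{a})$, which counts permutations in which $k$ immediately follows $j$. Gluing $(j,k)$ into one block and arranging it with the remaining $m-2$ components yields $(m-1)!$ such permutations, so every entry of these blocks equals $(m-1)!/m! = 1/m$, producing the row $(1/m)\textbf{1}^T$ and column $(1/m)\textbf{1}$. For the diagonal of the transition block, since each $x_{j,k}$ takes values in $\{0,1\}$, $x_{j,k}^2 = x_{j,k}$, so each diagonal entry is again $(m-1)!/m! = 1/m$; this matches the claimed block because $V$ has zero diagonal (the condition $i=k$ always applies when the two pairs coincide).

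The heart of the argument is the off-diagonal sum $S_{(i,j),(k,\ell)} := \sum_{\textbf{a}} x_{i,j}(\textbf{a})\, x_{k,\ell}(\textbf{a})$ for distinct pairs $(i,j) \neq (k,\ell)$. I will identify the three ``degenerate'' configurations that force a conflict in a single permutation. Case (i): $i=k$ with $j\neq \ell$, where component $i$ would need two distinct immediate successors. Case (ii): $j=\ell$ with $i\neq k$, where component $j$ would need two distinct immediate predecessors. Case (iii): $i=\ell$ and $j=k$, where the two adjacencies would demand the cyclic pattern $i\to j\to i$. In each case $S_{(i,j),(k,\ell)} = 0$, and these three cases are exactly the conditions that set $V_{(i,j),(k,\ell)} = 0$. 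In every remaining configuration the two pairs either share no component, in which case the two adjacent blocks together with the remaining $m-4$ singletons form $m-2$ objects and contribute $(m-2)!$ permutations, or they chain into a triple: $(i,j,\ell)$ when $j=k$, or $(k,i,j)$ when $i=\ell$, each giving one length-3 block with $m-3$ singletons, again $(m-2)!$ arrangements. Thus every non-degenerate off-diagonal entry equals $(m-2)!/m!$, matching the $(m-2)!V/m!$ term.

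Assembling the four blocks gives the stated closed form. The main obstacle is bookkeeping: I must verify that cases (i)--(iii) and the two sub-cases of chaining together with the disjoint case partition all ordered pairs of distinct column indices, and that each non-degenerate configuration contributes exactly $(m-2)!$ permutations, not more (e.g., the chaining sub-cases must not be double-counted with the disjoint sub-case, which is automatic because the conditions $j=k$ or $i=\ell$ are mutually exclusive with ``no shared component''). Once this combinatorial case analysis is verified, no further computation is required and the block decomposition follows.
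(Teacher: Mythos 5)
Your proof is correct and follows essentially the same route as the paper's: compute each block of $X_m^TX_m$ by gluing adjacent components into single objects and counting permutations, with the zero entries arising exactly from the three conflict configurations ($i=k$, $j=\ell$, or the reversed pair). Your case analysis is in fact slightly more complete than the paper's, since you explicitly verify the chained pairs $(i,j),(j,\ell)$ and $(i,j),(k,i)$ (three distinct components, value $(m-2)!$, consistent with $V=1$), whereas the paper's enumeration of the nonzero off-diagonal case is stated only for four distinct indices and leaves the chained configuration implicit.
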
 

Theorem \ref{thm:momentmat1} is helpful for finding the $D-$ and $I-$efficiencies of the full design without enumerating all $m!$ possible permutations. A similar result is shown for Model (\ref{eqn:oofamodel2}) using Theorem \ref{thm:momentmat2} below.

\begin{theorem}[Moment Matrix for the Full Design Under Model (\ref{eqn:oofamodel2})]
\label{thm:momentmat2}
Let $D_m$ be the full design that uses all $m!$ permutations only once. Let $X_m$ be the model matrix expansion of $D_m$ under Model (\ref{eqn:oofamodel2}), with the identifiability constraints $\beta_{m,m-1} = \delta_{m,m-1} = 0$. Then it follows that \begin{align}
    X_m^TX_m/m! = \begin{bmatrix}
        1  & (1/m)\textbf{1}^T & \frac{m-2}{m(m-1)}\textbf{1}^T  \\
        (1/m)\textbf{1}  & [(m-1)!I_q + (m-2)!V]/m! & [(m-3)!/m!]Q \\
        \frac{m-2}{m(m-1)}\textbf{1} & [(m-3)!/m!]Q^T   & [(m-2)(m-2)!I_q + R]/m!
    \end{bmatrix},
\end{align} where $V$ is as defined in Theorem \ref{thm:momentmat1}, the elements of $Q$ are \begin{align}
    Q_{(i,j),(k,\ell)} = \begin{cases}
        m-2  & \text{if } i = k, j \neq \ell \text{ or }  i \neq k, j = \ell, \\
        m-3  & \text{if } i \neq \ell, j = k,  \text{ or } i = \ell, j \neq k,\\
        m-4  & \text{if } \{i,j\} \cap \{k,\ell\} = \emptyset \\
        0  & \text{otherwise,}
    \end{cases}
\end{align} and the elements of $R$ are \begin{align}
    R_{(i,j),(k,\ell)} = \begin{cases}
        [(m-6)(m-5) + 4(m-4)](m-4)!  & \text{if } \{i,j\} \cap \{k,\ell\} = \emptyset \\
        (m-4)[(m-3)!]  & \text{if } i = \ell, j \neq k \text{ or } i \neq \ell, j = k \\
        0  & \text{otherwise.}
    \end{cases}
\end{align}
\end{theorem}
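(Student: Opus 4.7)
The plan is to compute each block of $X_m^T X_m$ by interpreting its entries as counts of permutations in $D_m$ satisfying prescribed adjacency or length-two separation constraints, and then divide by $m!$. After imposing the identifiability constraints $\beta_{m,m-1} = \delta_{m,m-1} = 0$, the columns of $X_m$ split into the intercept, the $q$ length-one transition indicators $x_{j,k}$, and the $q$ length-two indicators $x_{j,k}^{(2)}$, giving six blocks to verify. Throughout, the general technique is to pin down the positions demanded by the indicator product, declare the remaining slots free, and count: if $r$ slots are pinned in total, the remaining letters fill the remaining positions in $(m-r)!$ ways, and the only work is counting how many admissible starting-slot configurations survive the constraints.

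The intercept--intercept, intercept vs.\ length-one, and length-one vs.\ length-one blocks reproduce the formulas of Theorem \ref{thm:momentmat1}: any forced adjacency (or chain thereof obtained by overlapping pairs) is treated as a super-letter, so a single pair contributes $(m-1)!$ permutations and two compatible non-coincident pairs contribute $(m-2)!$, matching $V$. The intercept vs.\ length-two entry is obtained by noting that for any ordered pair $(j,k)$, the starting position of $j$ may be any $p \in \{1,\dots,m-2\}$, and the remaining $m-2$ letters fill the remaining positions freely, yielding $(m-2)(m-2)!$ permutations and hence $\frac{m-2}{m(m-1)}$ after normalization.

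The matrices $Q$ (length-one vs.\ length-two) and $R$ (length-two vs.\ length-two) are the bulk of the work, and I would handle each by enumerating overlap patterns between the ordered pairs $(i,j)$ and $(k,\ell)$: four distinct indices; a single shared index in one of the four possible positions ($i=k$, $j=\ell$, $i=\ell$, or $j=k$); or two shared indices. Each pattern converts the two indicator constraints into relations on the two starting slots $p$ and $p'$. Cases that force two distinct letters into the same slot contribute zero (this is what rules out $i=k$ or $j=\ell$ in the length-two diagonal, and $i=\ell,j=k$ there as well); the nonzero cases collapse the constrained letters into a chain of three positions (giving a count of the form $(\text{admissible slots})\cdot(m-3)!$) or leave two independent short transitions whose slots must simply avoid each other (giving a count of the form $(\text{admissible pairs})\cdot(m-4)!$). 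Matching the resulting counts against the stated entries of $Q$ and the $i=\ell,j\neq k$ / $i\neq \ell, j=k$ entries of $R$ is straightforward arithmetic once the pinned positions are identified.

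The main obstacle is the disjoint-index entry of $R$, where four distinct letters must be placed at positions $p, p+2, p', p'+2$ with $p,p' \in \{1,\dots,m-2\}$ and all four positions distinct; equivalently $p' \notin \{p-2, p, p+2\}$. The count of valid pairs $(p,p')$ depends on whether $p$ is interior (all three forbidden values lie inside $\{1,\dots,m-2\}$, leaving $m-5$ admissible $p'$) or within two of either endpoint (one forbidden value falls outside the range, leaving $m-4$ admissible $p'$). Summing $4(m-4)$ from the four boundary values of $p$ and $(m-6)(m-5)$ from the $m-6$ interior values produces $[(m-6)(m-5) + 4(m-4)](m-4)!$ permutations, matching the stated entry of $R$. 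The analogous boundary bookkeeping for the all-distinct entry of $Q$ is simpler but handled the same way. Assembling the six blocks, dividing by $m!$, and confirming that the identifiability constraints merely delete the row and column indexed by $(m,m-1)$ from each transition block completes the proof.
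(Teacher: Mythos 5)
Your proposal is correct and follows essentially the same route as the paper's proof: decompose $X_m^TX_m$ into the six blocks, reuse Theorem \ref{thm:momentmat1} for the intercept and length-one blocks, and compute each entry of $Q$ and $R$ by a case analysis on how the ordered pairs $(i,j)$ and $(k,\ell)$ overlap, pinning the forced positions and multiplying by a factorial for the free letters. Your boundary bookkeeping for the disjoint entry of $R$ (four boundary values of $p$ contributing $m-4$ choices each and $m-6$ interior values contributing $m-5$ each) is exactly the argument given in part ($\ell$) of the paper's proof.
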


\subsection{The TE Model Under Block Constraints} 

We now consider a case where constraints are placed on the order-of-addition, and all $m!$ permutations of $(1,\dots,m)$ are no longer feasible experimental runs. Consider an experiment with $m$ components, where each of the components is placed into one of $c$ blocks labeled $b_1, \dots, b_c$ such that if $i,j \in \{1,\dots,c\}$ and $i < j$, then all components in block $b_i$ must come before all components in block $b_j$. This way, all components within a block may be arranged in any order, but the $c$ blocks have a fixed order.  


\begin{figure}[!h]
    \centering
    \includegraphics[scale = 0.5]{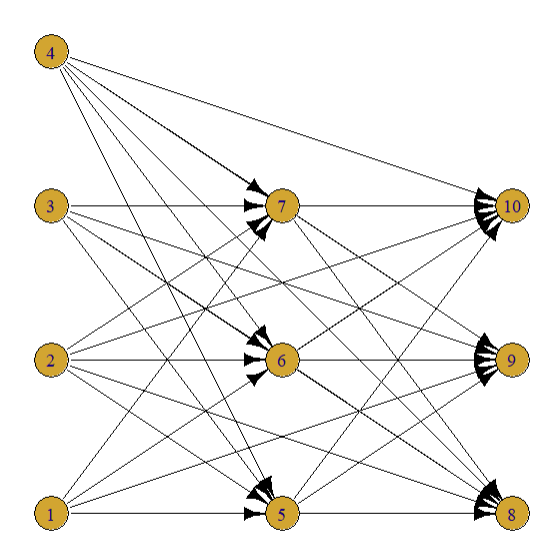}
    \caption{Example of Block Constraints}
    \label{fig:flowchart}
\end{figure} For example, in Figure \ref{fig:flowchart} there are ten components that are arranged into three blocks. Components 1,2,3, and 4 belong to block $b_1$, components 5,6, and 7 belong to block $b_2$, and the remaining components 8,9, and 10 belong to block $b_3$. All components in block 1 precede all components in block 2, and all components in block 2 precede all components in block 3. As Figure 1 shows, it is convenient to represent these constraints as a directed acyclic graph, which we denote as $G$. Equivalently, we may represent the set of constraints with a set of directed edges $B = \{(i,j) \mid i \text{ precedes } j \text{ in } G, \, \text{ where } i , j \in \{ 1,\dots,m \}  \}$.

When block constraints are present, it is mainly of interest to identify the optimal order within each block. This is because the order of the blocks is known, so the focus shifts to finding the order of the components in each block that is optimal. Therefore, we simplify the length 1 transition-effect model (\ref{eqn:oofamodel}) by only considering transition effects between components within the same block. This reduced model is \begin{align}
    \label{eqn:oofamodelreduced}
    \tau( \textbf{a}) =  \beta_0 + \sum_{i=1}^c\sum_{j=1}^m\sum_{k=1, k \neq j}^m I(j,k \in b_i)\beta_{j,k} x_{j,k}(\textbf{a}),
\end{align} where $I(j,k \in b_i) = 1$ if components $j$ and $k$ both belong to block $b_i$, and $I(j,k \in b_i) = 0$ otherwise. Notice that for any feasible permutation that satisfies the block constraints, $ \sum_{j=1}^m \sum_{k=1, k\neq j}^m I(j,k \in b_i) x_{j,k}(\textbf{a}) = m_i$, where $m_i$ is the number of components in block $b_i$. Since this would create a linear dependency, we set exactly one transition effect to zero per block to ensure identifiability. Similarly, the length 2 transition-effect model (\ref{eqn:oofamodel2}) can be reduced to \begin{align}
    \label{eqn:oofamodel2reduced}
    \tau(\textbf{a}) = \beta_0 + \sum_{i=1}^c\sum_{j=1}^m\sum_{k=1, k \neq j}^m I(j,k \in b_i)\Big[ \beta_{j,k}x_{j,k}(\textbf{a}) +  \delta_{j,k}x_{j,k}^{(2)}(\textbf{a})\Big],
\end{align} where $I(j,k \in b_i)$ is defined as it is for Model (\ref{eqn:oofamodelreduced}). To evaluate the $I-$ and $D-$ efficiencies of the full designs for these reduced models, it is important to be able to quickly construct the moment matrix without having to enumerate all possible permutations. Corollary \ref{cor:momentmat3} shows how this can be accomplished. \begin{Corollary}[Moment Matrix for the Full Design Under Model (\ref{eqn:oofamodelreduced})]
\label{cor:momentmat3}
Suppose that the $m$ components are arranged into $c$ blocks $b_1, \dots, b_c$ such that if $i < j$, then all components in $b_i$ must precede all components in $b_j$. Re-label the components so that components $1,\dots,n_1$ are in $b_1$, $n_{i-1} + 1, \dots, n_i$ are in block $b_i$ for $i = 2,\dots,c-1$, and $n_{c-1}+1, \dots, m$ are in $b_c$. Let $m_i$ be the number of components in block $b_i$. Let $D_m$ be the full design that uses each of the $N = \prod_{i=1}^c (m_i)! $ feasible permutations exactly once. Let $X_m$ be the model matrix expansion of $D_m$ under Model (\ref{eqn:oofamodelreduced}), with the constraint that $\beta_{m,m-1} = 0$. For $i = 1,\dots, c$,  let $q_i = 2{m_i \choose 2} - 1$, and let $V_i$ be a $q_i \times q_i$ matrix that is defined similarly to Theorem \ref{thm:momentmat1}. Then it follows that \begin{align}
    X_m^TX_m/N = \begin{bmatrix}
        1  & (1/m_1)\textbf{1}^T &  (1/m_2)\textbf{1}^T  & \dots & (1/m_c)\textbf{1}^T  \\
        (1/m_1)\textbf{1}  & M_1 & \frac{1}{m_1m_2}\textbf{1}_{q_1 \times q_2} & \dots & \frac{1}{m_1m_c}\textbf{1}_{q_1 \times q_c} \\
        (1/m_2)\textbf{1}  & \frac{1}{m_2m_1}\textbf{1}_{q_2 \times q_1}  & M_2 & \dots, & \frac{1}{m_2m_c}\textbf{1}_{q_2 \times q_c}\\
        \vdots & \vdots & \vdots & \ddots & \vdots \\
        (1/m_c)\textbf{1}  & \frac{1}{m_cm_1}\textbf{1}_{q_c \times q_1} & \frac{1}{m_cm_2}\textbf{1}_{q_c \times q_2} & \dots & M_c
    \end{bmatrix},
\end{align} where $\textbf{1}$ is a conformable column vector of 1s, $\textbf{1}_{q_i \times q_j}$ is a $q_i \times q_j$ matrix of ones, and  \begin{align*}
    M_i = [(m_i-1)!I_{q_i} + (m_i-2)!V_i]/N
\end{align*} for $i = 1,\dots,c$.
\end{Corollary}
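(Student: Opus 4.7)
The approach is to exploit two structural facts that are built into Model (\ref{eqn:oofamodelreduced}) and the block-constrained feasible region. First, the model matrix $X_m$ contains only an intercept column together with $c$ groups of columns, one group of $q_i = 2\binom{m_i}{2}-1$ columns per block, indexed by within-block ordered pairs; consequently $X_m^TX_m/N$ inherits the displayed $(c+1)\times(c+1)$ block structure automatically. Second, by the block constraints, every feasible $\mathbf{a}$ has the components of $b_i$ occupying a fixed consecutive set of positions, so the restriction of $\mathbf{a}$ to each $b_i$ is a permutation of that block, and as $\mathbf{a}$ ranges uniformly over the $N$ feasible orders, the $c$ restrictions are mutually independent and each is uniform on its respective $(m_i)!$ permutations. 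With these two observations in hand, each sub-block of the claimed moment matrix reduces to a short counting calculation.

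\textbf{Within-block diagonal blocks.} Because the components of $b_i$ appear in consecutive positions, a length-1 transition variable $x_{j,k}(\mathbf{a})$ with $j,k \in b_i$ depends only on the restriction of $\mathbf{a}$ to $b_i$, and its value coincides with the corresponding variable computed on that restriction alone. Hence the $q_i \times q_i$ sub-block of $X_m^TX_m$ for block $i$ equals $\big(\prod_{k\neq i}(m_k)!\big)$ times the analogous sum over the $(m_i)!$ permutations of block $b_i$. Applying Theorem \ref{thm:momentmat1} to block $b_i$ in isolation identifies that inner sum as $(m_i-1)! I_{q_i} + (m_i-2)! V_i$; dividing by $N$ then leaves $M_i = [(m_i-1)! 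I_{q_i} + (m_i-2)! V_i]/(m_i)!$, which is the promised form.

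\textbf{Intercept and cross-block entries.} The intercept-intercept entry is trivially $N/N = 1$. For the intercept crossed with a within-block-$i$ transition $(j,k)$, the number of feasible $\mathbf{a}$ in which $k$ directly follows $j$ is $(m_i-1)! \prod_{k'\neq i}(m_{k'})! = N/m_i$, giving entry $1/m_i$. For a cross-block entry with $(p,q)\in b_i$ and $(r,s)\in b_j$, $i\neq j$, the independence of block restrictions causes the joint count to factor as $(m_i-1)!(m_j-1)!\prod_{k'\neq i,j}(m_{k'})!$, yielding entry $1/(m_im_j)$. Since this value does not depend on the particular pairs $(p,q)$ and $(r,s)$ chosen in their respective blocks, the entire $q_i\times q_j$ cross-block equals $\mathbf{1}_{q_i\times q_j}/(m_im_j)$, matching the claim.

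\textbf{Main obstacle.} There is no serious technical obstacle; the argument is almost entirely bookkeeping, with the real content packaged into two inputs: the independence of block restrictions under the uniform distribution on feasible permutations (which decouples the cross-block entries into products), and Theorem \ref{thm:momentmat1} applied block by block (which handles the within-block diagonal). The one place care is needed is in explaining why a within-block-$i$ transition variable is genuinely a function of the restriction to $b_i$ only, even though the underlying permutation includes components of the other blocks; this follows because block $b_i$ occupies consecutive positions, so no component outside $b_i$ can ever directly precede or follow a component of $b_i$ within $b_i$'s position range. Once this point is stated, the remainder is a routine combinatorial verification.
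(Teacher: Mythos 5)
Your proposal follows essentially the same route as the paper's proof: partition $X_m$ into the intercept column and the $c$ groups of within-block transition columns, observe that each block occupies a fixed set of consecutive positions so that a within-block transition variable depends only on the restriction of $\mathbf{a}$ to that block and the $c$ restrictions vary independently and uniformly, apply Theorem \ref{thm:momentmat1} block by block for the diagonal sub-blocks, and factor the counts for the intercept and cross-block entries. One substantive point, however: your careful tracking of the multiplicity $\prod_{\ell\neq i}(m_\ell)!$ (the number of times each within-block arrangement is repeated across the $N$ feasible permutations) leads you to $M_i=[(m_i-1)!\,I_{q_i}+(m_i-2)!\,V_i]/(m_i)!$, whereas the corollary as stated, and the paper's proof (which asserts $X_i^TX_i=(m_i-1)!\,I_{q_i}+(m_i-2)!\,V_i$ with no multiplicity factor), give $M_i=[(m_i-1)!\,I_{q_i}+(m_i-2)!\,V_i]/N$. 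Your normalization is the correct one: the diagonal entry of $M_i$ is $\frac{1}{N}\sum_{\mathbf{a}}[x_{j,k}(\mathbf{a})]^2=\frac{1}{N}\sum_{\mathbf{a}}x_{j,k}(\mathbf{a})=1/m_i=(m_i-1)!/(m_i)!$, which agrees with the paper's own computation of $\mathbf{1}^TX_i/N$ but equals $(m_i-1)!/N$ only when $c=1$. So rather than containing a gap, your argument exposes a normalization slip in the stated result and in the paper's version of the proof; everything else in your write-up matches the paper's reasoning.
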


\subsection{Finding Efficient Exact Designs}

Several approaches exist for finding efficient exact experimental designs under an optimality criterion. For small values of $m$, it is very fast to enumerate all $m!$ possible runs (or fewer, if block constraints are present), and one can use an exchange algorithm to find an efficient design. In this case, a convenient solution is to use \texttt{optFederov()} in R, which can be found in \texttt{AlgDesign} package \citet{algdesign2022}. However, for large $m$ (e.g. $m \geq 9$), it becomes costly to enumerate all $m!$ possible runs, and this approach can be inefficient. Therefore, we propose methods for finding exact designs with high efficiency that do not require enumerating all feasible orders of addition.

\subsubsection{Simulated Annealing}
 
An algorithm that can be used to find efficient designs is the simulated annealing algorithm \citet{kirkpatrick1983optimization}. Simulated annealing (SA) is a general method for finding the minimum (or maximum) of a function. In general, the algorithm starts with an initial solution, and then proposes a new solution somewhere in the neighborhood of the current solution. This new solution is accepted with a probability that depends on a ``temperature'' and the values of the function at the current and proposed solutions. The temperature is a function that decreases (or ``cools'') as the number of iterations increase. When the number of iterations is still low, the temperature is high, which gives sub-optimal solutions a higher probability of being accepted. As the temperature cools, the algorithm will eventually only select new solutions that strictly improve the optimality of the solution.

\begin{algorithm}[H]
\label{Alg:SA}
  \SetAlgoLined
  \textbf{Input}: A number of components $m$, set of constraints $B$, sample size $n$, optimality criterion $\phi$, and the number of iterations $n_i$.\\

  1. Randomly initialize $D^{(0)}$, generate $M^{(0)}$, and calculate $I^{(0)} = \phi(M^{(0)})$. Set $I^* = I^{(0)}$. \\
  \For{$t = 1,2,\dots,n_i$}{
     \While{A new design has not been found}{
     2. Randomly select a row $i$ and column $j$ of $D^{(0)}$, where $j \in \{1,2,\dots,m-1\}$, and exchange elements in positions $j$ and $j+1$ in the $i^{th}$ row of $D^{(0)}$ to get a new matrix $D^{(1)}$. \\ 
     \If{the constraints in $B$ are not violated}{
        3. Keep $D^{(1)}$ and exit the loop.}
    }
    4. Find $M^{(1)}$ and store $I^{(1)} = \phi(M^{(1)})$. \\
    \If{  $U < \exp{ \{- [I^{(1)} - I^{(0)}]/[1/\log(t+1)]\}}$  }{
         5. Update $D^{(0)} = D^{(1)}$, $M^{(0)} = M^{(1)}$, $I^{(0)}=I^{(1)}$. \\
         6. If $I^{(0)} < I^*$, then update $D^* = D^{(0)}$ and $I^* = I^{(0)}$. 
    }
    }
 \Return{$D^*$}
 \caption{Simulated Annealing with Block Constraints}
\end{algorithm}

Algorithm \ref{Alg:SA} uses simulated annealing to generate designs that try to minimize an optimality criterion $\phi(\cdot)$. For $I-$optimality, $\phi(M) = \text{trace}(M^{-1}M_f)$, and for $D-$optimality, $\phi(M) = -\text{log}(|M|)$. In the unconstrained case, $M_f$ can be quickly found using Theorem (\ref{thm:momentmat1}) for Model (\ref{eqn:oofamodel}), or Theorem (\ref{thm:momentmat2}) for Model (\ref{eqn:oofamodel2}), so complete enumeration of the full design is not required to evaluate the $I-$efficiency of the design. If block constraints are present, Corollary \ref{cor:momentmat3} can be used to do this. At each iteration, a new design matrix is proposed by swapping adjacent components, provided that this does not violate the pairwise constraints. Instead of automatically adopting the new design, a new design is accepted with probability $\exp{ \{- [I^{(1)} - I^{(0)}]/[1/\log(t+1)]\}}$, where $I^{(0)}$ is the $\phi-$optimality of the previous design, $I^{(1)}$ is the $\phi-$optimality of the proposed design, and $t$ is the current iteration number. This means that it is possible to accept designs that have worse $\phi-$optimality earlier on in the algorithm, which is helpful for exploring the space of possible designs and avoiding becoming stuck at a locally optimal design. As $t$ increases, the algorithm will focus on finding more optimal designs.

\subsubsection{Bubble Sorting Algorithm}

Another possible algorithm for finding efficient fixed designs is to use a sorting algorithm. The general idea is to iteratively ``sort'' each row of the design to find the order of the components in each row that maximize the optimality criteiron of the entire design. In \citet{lin2019order}, the bubble sort algorithm was used to find efficient OofA designs under the $D-$optimality criterion. This procedure is summarized in Algorithm \ref{Alg:BS} below, which is adapted to handle block constraints. 

\begin{algorithm}[H]
\label{Alg:BS}
  \SetAlgoLined
  \textbf{Input}: A number of components $m$, set of constraints $B$, sample size $n$, optimality criterion $\phi$, and the number of maximum iterations $iter_{max}$.\\
  1. Randomly initialize $D^{(0)}$ with $n$ rows, and generate $M^{(0)}$. Calculate $I^{(0)} = \phi(M^{(0)})$. \\
  \For{$iter = 1,2,\dots,iter_{max}$}{
     \For{$r = 1,2,\dots,n$}{
        2. \texttt{Still\_Sorting} = True. \\
        \While{\texttt{Still\_Sorting}}{
        \For{$j = i,\dots,m-1$}{
            3. Select the $r^{th}$ row of $D^{(0)}$, then check if the elements in positions $j$ and $j+1$ can be exchanged to get a new matrix $D^{(1)}$.\\
            \If{the constraints in $B$ are not violated}{
            4. Find $M^{(1)}$ and store $I^{(1)} = \phi(M^{(1)})$.\\
             \eIf{  $I^{(1)}<I^{(0)}$  }{
         5. $D^{(0)} = D^{(1)}$, $M^{(0)} = M^{(1)}$, $I^{(0)}=I^{(1)}$.  \texttt{Still\_Sorting} = True. \\
    }{
    6. \texttt{Still\_Sorting} = False. 
    }}
            
        } }
    }
    
     }
    8. Store the best design so far as $D^*=D^{(0)}$ and the corresponding efficiency $I^*=I^{(0)}$.\\
 \Return{$D^*$}
 \caption{Bubble Sort with Block Constraints}
\end{algorithm}

As an example, suppose we have a design matrix with $n$ rows for $m = 4$ components, and the first row of it is $(1,3,2,4)$. The bubble sort algorithm will first try to swap 1 and 3. It will compare the $\phi-$efficiency of the design with 1,3 to the design with 3,1. If the new design is better, 1 and 3 will be swapped. Suppose they are swapped, so the first row is now $(3,1,2,4)$. Then, the algorithm will try switching 2 and 1. The algorithm will continue along this row until it is unable to swap two adjacent elements in a way that improves the design efficiency. Once the first row is ``sorted'' (i.e., the best ordering is found for the first row), the procedure continues for the following rows of the design matrix. This process is repeated $iter_{max}$ times for the entire matrix. Overall, Algorithm \ref{Alg:BS} is a greedy algorithm, since it only performs exchanges that strictly improve the $\phi-$optimality criterion. 

\subsubsection{Greedy Randomized Adaptive Search Procedure (GRASP)}


Another algorithm that can be used to find efficient designs is the Greedy Randomized Adaptive Search Procedure (GRASP). GRASP is a metaheuristic that iterates two main steps: first, a randomized greedy construction method is used to construct an initial solution, and then, a local optimization procedure is used to improve the solution \citet{resende2010greedy}. This procedure is repeated several times, and the best solution is returned at the end. In this case, a ``solution'' is a design with $n$ runs. The randomized greedy construction method constructs a random design with $n$ runs and a nonsingular moment matrix. It then considers several pairwise exchanges, and randomly selects one of the $q$ best exchanges, for some natural number $q$. This way, more varied initial design are proposed. Our implementation of the GRASP algorithm is summarized in Algorithm \ref{Alg:GRASP}.  

\begin{algorithm}[H]
\label{Alg:GRASP}
  \SetAlgoLined
  \textbf{Input}: A number of components $m$, set of constraints $B$, sample size $n$, optimality criterion $\phi$, number of candidates $n_{cand}$, and the number of iterations $n_i$.\\

  \For{$t = 1,2,\dots,n_i$}{
      1. Randomly find a design $D^{(0)}$ with $n$ runs and non-singular moment matrix $M$, and calculate $I = \phi(M)$. Set $q = 10$. Initialize a list $L$ of length $n_{cand}.$ \\
      \For{$i = 1,2,\dots,n_{cand}$}{
            2. Randomly select a row $r_i$ and column $j_i$ of $D$ to exchange, where $j \in \{1,\dots,m-1\}$, and exchange elements in positions $j_i$ and $j_i+1$ of the $r^{th}$ row of $D^{(0)}$ to get $D^{(i)}$. \\
            \If{the constraints in $B$ are not violated}{
                3. Find $M^{(i)}$ and store $\phi(M^{(i)})$ in the $i^{th}$ position of $L$. \\
                4. Store the values of $r_i$ and $j_i$ used to construct $D^{(i)}$. \\
            } 
      }
    5. Let $I_q$ be the indices of the $q^{th}$ smallest elements of $L$. \\
    6. Randomly select an index $\ell$ from $I_q$. Let $D^{(\ell)}$ be the design found from exchanging the elements in positions $j_{\ell}$ and $j_{\ell} + 1$ in row $r_{\ell}$ of $D^{(0)}$. \\
    7. Use a local search procedure with $D^{(\ell)}$ as the initial design to find a locally optimal design and its $\phi-$criterion. Store this design as the new $D^{(0)}$. When $t$ is a multiple of 10, update $q = \text{max}(q-1,1)$.   \\
    8. Store the design with the lowest $\phi-$criterion so far as $D^*$.
    }
 \Return{$D^*$}
 \caption{GRASP with Block Constraints}
\end{algorithm}

Algorithm \ref{Alg:GRASP} first finds a random design $D$ with $n$ runs with a non-singular moment matrix $M$. In Steps 2-4, $n_{cand}$ possible exchanges are considered, each resulting in $n_{cand}$ possible designs. The $\phi-$criteria are found for each candidate design, and are stored in a list $L$. In Step 5, the designs with the $q$ smallest $\phi-$criteria are identified (i.e. the best $q$ designs), and one of the best $q$ designs is randomly selected in Step 6.  Steps 1 to 6 comprise the ``greedy randomized construction'' procedure that is common to GRASP algorithms \citet{resende2010greedy}. The value of $q$ (initialized at 10) determines the quality of designs that the algorithm is willing to consider for the initial design $D$. Smaller values of $q$ mean that the GRASP algorithm will only consider exchanges that have better efficiency, which correspond to lower $\phi-$ values in this case. When $q = 1$, the algorithm is fully greedy in the sense that it only considers the best exchanges in each iteration. After an initial design is constructed, it is passed to a local search procedure in Step 7, and the best $\phi-$optimality is stored in Step 8. The role of the local search procedure is to find the best design among a set of neighboring designs. In our implementation, the local search procedure randomly selects one pair of adjacent components in each row, and swaps them if it would result in a more optimal design. Steps 1 to 8 repeat $n_i$ times. Intuitively, higher values of $n_{cand}$ and $n_i$ lead to more efficient designs, since this would allow for more design search attempts. In our implementation, we found that using $n_{cand} = 250$ and $n_i = 100$ were sufficient for identifying highly efficienct exact designs.

\section{Empirical Results}
\label{sec:simuresults}
In this section, the algorithms introduced in Section \ref{sec:methods}.3 are used to construct designs under various scenarios. The relative $D-$ and $I-$ efficiencies of these designs are compared to that of the full design, which uses all feasible permutations. The goal is to see if efficient, cheaper designs could be found in a variety of different settings and under various constraints on the pairwise order of the components. These comparisons are made with no constraints in Section \ref{sec:simuresults}.1 and with block constraints in Section \ref{sec:simuresults}.2. In all scenarios, comparisons are made for relatively large numbers of components $m \geq 9$. 

\subsection{Unconstrained Relative Efficiencies}

The Simulated Annealing, Bubble Sort, and GRASP algoirthms were used to find $D-$ and $I-$ efficient designs for $m = 9, 10, 11$. These designs were found for varying sample sizes, denoted $n_s = 400, 500, 600$. These sizes were chosen because they are a small proportion of the $m!$ available runs, and in each case the sample size is larger than $2{m \choose 2} + 1$, which is sufficient for estimating all parameters in Model (\ref{eqn:oofamodel}). Each algorithm was executed 20 times, and the median relative $D-$ and $I-$efficiencies were the recorded in Tables \ref{tab:intermediansD} and \ref{tab:intermediansI}, respectively.
 
\begin{table}[!ht]
    \centering
    \begin{tabular}{||c c c c c c||} 
 \hline
 m & $n_s$ & $n_s/m!$ & Simulated Annealing & Bubble Sort & GRASP \\ [0.5ex] 
 \hline\hline
 9 & 400 & 0.0011 & 0.9111  & 0.9723  & 0.9822 \\ [1ex] 
 \hline
  & 500 & 0.0014 & 0.9277 & 0.9809  & 0.9885 \\ [1ex] 
 \hline
  & 600 & 0.0017 & 0.9403 & 0.9861 & 0.9919 \\ [1ex] 
 \hline
 10 & 400 & 0.0001 & 0.8856 & 0.9565 & 0.9725\\ [1ex] 
 \hline
  & 500 & 0.0001 & 0.9098 & 0.9699  & 0.9795 \\ [1ex] 
 \hline
  & 600 & 0.0002 & 0.9253 & 0.9773  & 0.9853 \\ [1ex] 
 \hline
  11 & 400 & $<0.0001$ & 0.8579  & 0.9368 & 0.9430 \\ [1ex] 
 \hline
  & 500 & $<0.0001$ & 0.8880 & 0.9554 & 0.9608 \\ [1ex] 
 \hline
  & 600 & $<0.0001$ & 0.9075 & 0.9663 & 0.9707 \\ [1ex] 
 \hline

\end{tabular}
    \caption{Median Relative $D-$ Efficiencies}
    \label{tab:intermediansD}
\end{table}

\begin{table}[!ht]
    \centering
    \begin{tabular}{||c c c c c c||} 
 \hline
 m & $n_s$ & $n_s/m!$ & Simulated Annealing & Bubble Sort & GRASP \\ [0.5ex] 
 \hline\hline
 9 & 400 & 0.0011 & 0.8229 & 0.9435  & 0.9642  \\ [1ex] 
 \hline
  & 500 & 0.0014 & 0.8576  & 0.9619  & 0.9769 \\ [1ex] 
 \hline
  & 600 & 0.0017 & 0.8805 & 0.9725 & 0.9841 \\ [1ex] 
 \hline
 10 & 400 & 0.0001 & 0.7771 & 0.9114 & 0.9376 \\ [1ex] 
 \hline
  & 500 & 0.0001 & 0.8211 & 0.9386 & 0.9583 \\ [1ex] 
 \hline
  & 600 & 0.0002 & 0.8508 & 0.9551 & 0.9705 \\ [1ex] 
 \hline
  11 & 400 & $<0.0001$ & 0.7243  & 0.8738 &  0.9021\\ [1ex] 
 \hline
  & 500 & $<0.0001$ & 0.7796 & 0.9089 & 0.9332 \\ [1ex] 
 \hline
  & 600 & $<0.0001$ & 0.8172 & 0.9319 & 0.9517 \\ [1ex] 
 \hline

\end{tabular}
    \caption{Median Relative $I-$ Efficiencies}
    \label{tab:intermediansI}
\end{table}

Tables \ref{tab:intermediansD} and \ref{tab:intermediansI} show the median relative $D-$ and $I-$ efficiencies, respectively, of designs of size $n_s$ to the full design for the simulated annealing (SA), bubble sort, and GRASP algorithms. For all methods, as the subset size $n_s$ increases, the median relative $D-$ and $I-$ efficiency increases. It can be seen that as $m$ increases, GRASP is the preferred method in terms of relative $D-$ and $I-$efficiency. The median relative efficiences of all methods tends to decrease as $m$ increases; as $m$ increases, the set of all possible designs grows much larger, so it is more difficult to search for optimal designs. In particular, the SA algorithm struggles to find designs with high $I-$efficiency for $m = 11$ and $n_s = 400, 500$. Both GRASP and bubble sort consistently outperform SA in terms of the $D-$ and $I-$optimality criteria. Furthermore, in every case, the designs found using GRASP have higher median relative $D-$ and $I-$efficiencies than those found using bubble sort. 

\subsection{Constrained Relative Efficiencies}

The Simulated Annealing, bubble sort, and GRASP algorithms were also used to find $D-$ and $I-$efficient designs under block constraints for $m = 9,10,11$. For each value of $m$, the number of blocks $c$ was varied, and values of $c = 2,3$ were considered. This was done to examine the effect of the number of blocks on the relative $D-$ and $I-$efficiencies of the designs. In each scenario, the number of components were placed as evenly as possible among the $c$ blocks. If $m$ was not evenly divisible by $c$, extra components were placed in the latter blocks. Tables \ref{tab:constrainedDeffs} and \ref{tab:constrainedIeffs} specifically show the distribution of the components among the $c$ blocks ($m_1, \dots, m_c$) for each scenario. When $m = 9$, designs of size $n_s = 50, 100$ were found; for $m \geq 10$, designs of size $n_s = 100,200$ were found. The ratio of the sample size $n_s$ to the size of the full design under block constraints ($N = \prod_{i=1}^c m_i!$) is also shown in Tables \ref{tab:constrainedDeffs} and \ref{tab:constrainedIeffs}.

\begin{table}[!ht]
\centering 
\begin{tabular}{||llllllll||}
\hline 
m  & c & $(m_1, \dots, m_c)$                           & $n_s$ & $n_s/N$ & SA & Bubble Sort & GRASP    \\ [0.5ex]
\hline \hline 
9  & 2 & $(4,5)$           & 50   & 0.0174 & 0.9185 & 0.8737 & 0.9224   \\ [1ex] \hline 
   &   &                                & 100  & 0.0347 & 0.9919 & 0.9719 & 0.9902  \\ [1ex] \hline  
   & 3 & $(3,3,3)$ & 50   & 0.2314 & 0.9857 & 0.9808 & 0.9855    \\ [1ex] \hline  
   &   &                                & 100  & 0.4630 & 0.9964 & 0.9957 & 0.9964  \\ [1ex] \hline 
10 & 2 & $(5,5)$           & 100   & 0.0069 &  0.9712 & 0.9446  & 0.9670   \\ [1ex] \hline 
   &   &                                & 200  & 0.0138 & 1.0002 & 0.9899 & 0.9988   \\ [1ex] \hline 
   & 3 & $(3,3,4)$ & 100   & 0.1157 & 1.0038 & 0.9974  & 1.0033    \\ [1ex] \hline 
   &   &                                & 200  & 0.2314 & 1.0085 & 1.0069 & 1.0083   \\ [1ex] \hline 
11 & 2 & $(5,6)$           & 100   & 0.0012 & 0.9375  & 0.8999  & 0.9299    \\ [1ex] \hline 
   &   &                                & 200  & 0.0024 & 0.9899 & 0.9740  & 0.9880 \\ [1ex] \hline 
   & 3 & $(3,4,4)$ & 100   & 0.0289 & 0.9961 & 0.9842 & 0.9944   \\ [1ex] \hline 
   &   &                                & 200  & 0.0578 & 1.0051 & 1.0016 & 1.0045  \\ [1ex] \hline \hline 
\end{tabular}
\caption{Median Relative $D-$Efficiency Comparison Under Block Constraints}
\label{tab:constrainedDeffs}
\end{table}

Tables \ref{tab:constrainedDeffs} and \ref{tab:constrainedIeffs} show the median relative $D-$ and $I-$efficiencies, respectively, of designs obtained by the SA, Bubble Sort, and GRASP algorithms to the full design. In both tables, increasing the sample size generally increases the median relative efficiency of the design with respect to the full design. All three algorithms typically found designs with higher relative $D-$efficiency than relative $I-$efficiency, especially in the case when $m = 9, n = 50$. Of the three methods used to construct designs, the Bubble Sort algorithm had designs with the lowest relative $D-$ or $I-$efficiencies in most cases. In Table \ref{tab:constrainedDeffs}, the SA algorithm typically performs the best, and it is closely followed by the GRASP algorithm. This is very different from the unconstrained scenarios in Section \ref{sec:simuresults}.1, where the SA algorithm did not perform well; this difference likely occurs because the presence of constraints reduces the number of possible permutations that are feasible in a design.

\begin{table}[!ht]
\centering 
\begin{tabular}{||llllllll||}
\hline 
m  & c & $(m_1, \dots, m_c)$                           & $n_s$ & $n_s/N$ & SA & Bubble Sort & GRASP    \\ [0.5ex]
\hline \hline 
9  & 2 & $(4,5)$           & 50   & 0.0174 & 0.8425 & 0.7269  & 0.8256   \\ [1ex] \hline 
   &   &                                & 100  & 0.0347 & 0.9419 & 0.9311 & 0.9652   \\ [1ex] \hline  
   & 3 & $(3,3,3)$ & 50   & 0.2314 & 0.9710 & 0.9608 &  0.9711   \\ [1ex] \hline  
   &   &                                & 100  & 0.4630 & 0.9676 & 0.9909 &  0.9928  \\ [1ex] \hline 
10 & 2 & $(5,5)$           & 100   & 0.0069 & 0.9149  & 0.8712 & 0.9219   \\ [1ex] \hline 
   &   &                                & 200  & 0.0138 & 0.8764 & 0.9685 & 0.9848   \\ [1ex] \hline 
   & 3 & $(3,3,4)$ & 100   & 0.1157 & 0.9585 & 0.9807 & 0.9913    \\ [1ex] \hline 
   &   &                                & 200  & 0.2314 & 0.9351 & 0.9995 & 1.0020    \\ [1ex] \hline 
11 & 2 & $(5,6)$           & 100   & 0.0012 & 0.8689  & 0.7822 & 0.8530   \\ [1ex] \hline 
   &   &                                & 200  & 0.0024 & 0.8475  & 0.9388 & 0.9658 \\ [1ex] \hline 
   & 3 & $(3,4,4)$ & 100   & 0.0289 & 0.9479 & 0.9570 & 0.9778   \\ [1ex] \hline 
   &   &                                & 200  & 0.0578 & 0.9169 & 0.9922 & 0.9976  \\ [1ex] \hline \hline 
\end{tabular}
\caption{Median Relative $I-$Efficiency Comparison Under Block Constraints}
\label{tab:constrainedIeffs}
\end{table}

Table \ref{tab:constrainedIeffs} shows that in all but two cases, the GRASP algorithm produces designs with higher relative $I-$efficiency than SA and Bubble Sort. The two exceptions are when $m = 9, n = 50, c = 2$ and $m = 11, n = 100, c = 2$; in both cases, the GRASP algorithm was slightly behind the SA algorithm. Several entries in Table \ref{tab:constrainedDeffs} have relative efficiencies greater than 1, and so does one entry in Table \ref{tab:constrainedIeffs}. This shows that, in general, when block constraints are applied, the full design is not necessarily optimal. 

\section{Example}
\label{sec:example}

In this section, an example from \citet{skorobohatyj} is used to study the performance of the Bubble Sort (BB) and GRASP algorithms and to compare the performance of the TE and PWO models at detecting the optimal order of addition. This data set was created using IBM data and used for the Sequential Ordering Problem, which was also examined in \citet{karan2011branch}. In this problem, there are $m = 11$ components in total. The original problem allows for all $11!$ possible orders to be examined. To highlight our method's ability to handle block constraints, we randomly divided the components into 3 blocks, which are $b_1 = \{1,4,7,9,10\}$, $b_2 = \{2,6,8\}$, and $b_3 = \{3,5,11\}$. In this case, all components in $b_1$ precede all components in $b_2$, and all components in $b_2$ precede all components in $b_3$. There are $N = 5!3!3! = 4320$ possible orders of addition under these constraints. For each row, the response variable $y$ was found, which represents the cost of completing all 11 jobs in a particular order. The order with the lowest cost is optimal. 



The BB and GRASP are used to obtain optimal designs of size $n = 150$ and $n = 200$, which represent roughly $3.5\%$ and $4.6\%$ of the total number of available runs, respectively. These algorithms were used to search for $I-$optimal designs under the length-one TE model (\ref{eqn:oofamodelreduced}). These designs were used to fit the length-one TE model, the length-two TE model (\ref{eqn:oofamodel2reduced}), and the PWO model. Once fit, each of the three models was used to find the order with the lowest estimated cost. This procedure was repeated 100 times. The average rank over all 100 attempts was used as a metric to compare the performance of the methods. The results are given in Table \ref{tab:case1}.\\ 




\begin{table}[!ht]
\centering 
\begin{tabular}{||ccccc||}
\hline \hline 
    & \multicolumn{2}{c}{BB} & \multicolumn{2}{c||}{GRASP} \\
    \hline 
n   & 150        & 200       & 150         & 200         \\
\hline 
PWO & 54.78      & 46.39     & 45.58       & 56.73       \\
\hline
TE1 & 18.58      & 13.11     & 14.91       & 14.62       \\
\hline 
TE2 & 8.50      & 12.61     & 10.60       & 18.10  \\    
\hline  \hline 
\end{tabular}
\caption{Average rank of the estimated optimal order. Lower ranks are more optimal.}
\label{tab:case1}
\end{table}

Table \ref{tab:case1} shows the average of the optimal order's position in 100 prediction attempts. From the table, we can notice that both TE models outperform PWO model under both design algorithms and for both sample sizes. When $n = 150$, the length-two TE model (TE2) is only slightly better than the length-one TE model (TE1), while it achieves a better result with the BB algorithm when $n=200$. However, the length-one TE model is better with the GRASP algorithm compared to the length-two TE model when $n=200$. The two design algorithms have similar performance with the length-one TE model when $n=200$, and bubble sorting outperformed GRASP when the PWO model was used when $n=150$. Furthermore, it is noticed that if $n$ becomes even larger, the difference in average ranking between BB and GRASP diminishes under the length-one TE model, though the gap between PWO and TE remains. Overall, TE1 is more stable than TE2, and its results improve as the sample size increases. This example shows that, with an appropriately selected design, the TE model can identify orders with lower cost more frequently than the PWO model. 


\section{Conclusion and Further Research Interests}
\label{sec:conc}


This article introduced a transition effect model for the Order-of-Addition (OofA) problem. This model represents the response using transitions between nearby components in the model. Initial theory shows that the full design is $\phi-$optimal for the proposed transition effect models under many criteria, and closed-form constructions for the moment matrix of the full design were provided. These initial results provide a fast way to evaluate the $D-$ and $I-$optimality criterion of any design relative to the full design. A novel implementation of a Greedy Randomized Adaptive Search Procedure (GRASP) was implemented to search for cost-efficient designs with high efficiency relative to the full design. This algorithm was compared with Simulated Annealing (SA) and Bubble Sort (BB), which are two alternative approaches for constructing highly efficient OofA designs. In Section \ref{sec:simuresults}, empirical results showed that GRASP and Bubble Sorting both outperformed the SA algorithm in terms of finding efficient designs under the $D-$ and the $I-$optimality criteria. In Section \ref{sec:example}, the proposed TE model was shown to be better than the PWO model in terms of average rank of predicted optimal order. Furthermore, the GRASP algorithm outperformed bubble sort when the sample size was smaller. 

There is more work that can be done to extend this research. One research direction would be to focus on a more general case of constraints. For example, what would happen if constraints exist on the pairwise order of components, but the components are not divided into even blocks? In Section \ref{sec:simuresults}, it was observed that the full design was not $D-$ or $I-$optimal in some cases. The theoretical results in this paper show $D-$optimality for the full design in the unconstrained case, but this does not generalize to cases where pairwise constraints exist on the order of addition. It would also be useful to develop theory for the closed-form construction of optimal fractional OofA designs under a transition-effect model. Additional avenues of research would be developing transition-effect that include other covariates in addition to the order of addition, such as mixture proportions \citet{rios2021order} or factorial effects \citet{tsai2023dual}. 

\section*{Data availability statement}
The authors confirm that the data supporting the findings of this study are available within the article and its supplementary materials.

\section*{Funding}

This research was not supported by any grants or funding. 

\section*{Disclosure statement}
No potential conflict of interest was reported by the authors. 

\bibliographystyle{chicago}
\bibliography{biblio.bib}

\section*{Appendix - Proofs}

\begin{Lemma}
\label{lemma:permmatrix}

Let $\textbf{a} \in D_m$. Let $x(\textbf{a})$ be the model matrix expansion of $\textbf{a}$ under a length $\ell$ transition effect model, where $\ell \in \{1,2\}$. Let $\pi \in D_m$. For a permutation $\textbf{a} \in D_m$, let $\pi\textbf{a} = (\pi_{a_1},\pi_{a_2}, \pi_{a_3} \dots,\pi_{a_m})$. Then there is a permutation matrix $R_{\pi}$ such that $x(\pi \textbf{a})^T = x(\textbf{a})^TR_{\pi}$.
\end{Lemma}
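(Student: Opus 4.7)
The plan is to construct $R_\pi$ explicitly as a block-diagonal permutation matrix acting on the coordinates of $x(\textbf{a})$, and then verify the claimed identity entry by entry. Recall that $x(\textbf{a})$ consists of an intercept coordinate together with, for the length-one model, one coordinate $x_{j,k}(\textbf{a})$ for each ordered pair $(j,k)$ with $j \neq k$; in the length-two model we additionally have a coordinate $x^{(2)}_{j,k}(\textbf{a})$ for each such pair.

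The first step is to establish the key invariance
$$x_{\pi_j,\pi_k}(\pi\textbf{a}) = x_{j,k}(\textbf{a}) \qquad \text{and} \qquad x^{(2)}_{\pi_j,\pi_k}(\pi\textbf{a}) = x^{(2)}_{j,k}(\textbf{a})$$
for every ordered pair $(j,k)$ with $j \neq k$. Both follow directly from the definitions together with the bijectivity of $\pi$: the $s$-th entry of $\pi\textbf{a}$ is $\pi_{a_s}$, so a transition of length $\ell \in \{1,2\}$ from $\pi_j$ to $\pi_k$ occurring at positions $(s,s+\ell)$ in $\pi\textbf{a}$ is equivalent, after applying $\pi^{-1}$ coordinatewise, to a transition of length $\ell$ from $j$ to $k$ at positions $(s,s+\ell)$ in $\textbf{a}$. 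In particular, $d(\pi_j,\pi_k,\pi\textbf{a}) = d(j,k,\textbf{a})$, which is what is needed for the length-two indicators.

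Next, I would define $R_\pi$ as the block-diagonal permutation matrix with a $1\times 1$ identity block on the intercept coordinate and, on the remaining block(s), the permutation induced by the bijection $(j,k) \mapsto (\pi_j,\pi_k)$ on the set of ordered pairs with distinct entries (the same coordinate permutation is applied independently to the length-one and length-two blocks when both are present). Concretely, $R_\pi$ has a $1$ in row $(j,k)$ and column $(\pi_j,\pi_k)$, and zeros elsewhere within that block. The identity $x(\pi\textbf{a})^T = x(\textbf{a})^T R_\pi$ is then verified coordinatewise: the entry of $x(\pi\textbf{a})^T$ at index $(i,\ell) = (\pi_j,\pi_k)$ is $x_{\pi_j,\pi_k}(\pi\textbf{a}) = x_{j,k}(\textbf{a})$ by the invariance, which is precisely the entry of $x(\textbf{a})^T R_\pi$ at index $(i,\ell)$ by the construction of $R_\pi$.

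The main obstacle here is notational rather than conceptual: one must index the coordinates carefully so that the induced permutation on pair-indices is unambiguous, and so that the length-one and length-two blocks are permuted consistently when both are present. Once the coordinate correspondence is set up correctly, the matrix identity is immediate from the transition-indicator invariance, and no further computation is required.
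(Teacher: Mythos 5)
Your proof is correct and follows essentially the same route as the paper's: establish the invariance $x_{\pi_j,\pi_k}(\pi\textbf{a}) = x_{j,k}(\textbf{a})$ (equivalently $d(\pi_j,\pi_k,\pi\textbf{a}) = d(j,k,\textbf{a})$) from the bijectivity of $\pi$, then realize the index relabeling $(j,k)\mapsto(\pi_j,\pi_k)$ as a block-diagonal permutation matrix fixing the intercept coordinate and acting identically on the length-one and length-two blocks. If anything, your coordinatewise verification is stated more carefully than the paper's.
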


\begin{proof}
Let $\pi \in D_m$ and $\textbf{a} \in D_m$. Let $j$ and $k$ be distinct elements of $\{1,2,\dots,m\}$. If $d(j,k,\textbf{a}) = \ell$, then by definition of $\pi \textbf{a}$, $d(\pi_{a_j}, \pi_{a_k}, \pi \textbf{a}) = \ell$. Hence, it follows that $x_{jk}(\textbf{a}) = x_{\pi_j,\pi_k}(\pi\textbf{a})$ for any pair $jk$.

If $\ell = 1$, let $R_{\pi}^{(1)}$ be a matrix with columns and rows indexed by the pairs $(12,13,...(m-1)m)$ such that $R_{\pi}^{(1)}(jk,\pi_j,\pi_k) = 1$ for each pair $jk$ and the remaining elements are 0. Then $x(\pi \textbf{a})^T = x(\textbf{a})^T\text{diag}(1,R_{\pi}^{(1)})$ for any $\pi \in D_m$. So, $R_{\pi} = \text{diag}(1,R_{\pi}^{(1)})$ is a signed permutation matrix which satisfies the lemma.

If $\ell = 2$, let \begin{align}
    R_{\pi} = \begin{bmatrix}
        1  & \textbf{0}  & \textbf{0} \\
        \textbf{0} & R_{\pi}^{(1)}  & \textbf{0}\\
        \textbf{0} & \textbf{0}   &  R_{\pi}^{(2)}
    \end{bmatrix}
\end{align} where $R_{\pi}^{(2)}$ is a matrix with columns and rows indexed by the pairs $(12,13,\dots,(m-1)m)$ such that $R_{\pi}^{(2)}(jk,\pi_j,\pi_k) = 1$ for each pair $jk$ and the remaining elements are 0, and $\textbf{0}$ is a conformable matrix of zeroes. Then $x(\pi \textbf{a})^T = x(\textbf{a})^TR_{\pi}$ for any $\pi \in D_m$.  
\end{proof}

\noindent \textbf{Proof of Corollary \ref{cor:fullopt}.}

\begin{proof}
    The structure of this proof is very similar to that of Theorem 1 in \citet{OOADesign} (which is why we denote this as a Corollary). We write the proof in its entirety for clarity and convenience.     

    Let $D_m$ be the set of all $m!$ permutations of $(1,2,\dots,m)$. Let $\phi$ be a criterion that is concave and permutation invariant. As in the proof of Lemma \ref{lemma:permmatrix}, let $x(\textbf{a})$ represent the model matrix expansion of $\textbf{a}$ under a transition-effect model of length 1 or 2. Then $M_f = (1/m!)\sum_{\textbf{a} \in D_m} x(\textbf{a})x(\textbf{a})^T$.
    
    Let $w$ be an arbitrary design measure over $D_m$. For $\pi \in D_m$, let $\pi\textbf{a} = (\pi_{a_1}, \dots, \pi_{a_m})$, and let $\pi w$ be the design measure that assigns, for each $\textbf{a} \in D_m$, weight $w(\textbf{a})$ to the circuit $\pi \textbf{a}$. By concavity of $\phi$, we have that \begin{align}
    \label{ineq:concave}
    \phi\Big( \sum_{\pi \in D_m} \frac{1}{m!} M(\pi w) \Big) \geq \sum_{\pi \in D_m } \frac{1}{m!} \phi\Big( M(\pi w) \Big) 
    \end{align} Notice that, for a fixed $\textbf{a} \in D_m$, $\{ \pi\textbf{a} \mid \pi \in D_m  \} = D_m,$ which implies \begin{align}
     &\frac{1}{m!} \sum_{\pi \in D_m} M(\pi w) =  \frac{1}{m!}\sum_{\textbf{a} \in D_m} \sum_{\pi \in D_m} w(\pi \textbf{a}) x(\pi\textbf{a})x(\pi\textbf{a})^T  \\
      \label{eqn:LHS}
     = & \frac{m!}{m!} \sum_{\textbf{a} \in D_m} w(\textbf{a}) M_f = M_f.
    \end{align} Also notice that, by Lemma 1, \begin{align}
    &\phi\Big( M(w)  \Big) = \phi\Big( \sum_{\pi \in D_m} w(\pi \textbf{a}) x(\pi\textbf{a})x(\pi\textbf{a})^T \Big) = \\
    \label{eqn:RHS}
    &\phi(R_{\pi}^TM(w)R_{\pi}) = \phi(M(w))
    \end{align} where the last equality follows because $\phi$ is permutation-invariant. By substituting (\ref{eqn:LHS}) and (\ref{eqn:RHS}) into the inequality (\ref{ineq:concave}), we find that \begin{align}
        \phi\Big(  M_f \Big) \geq 
 \frac{1}{m!}\sum_{\pi \in D_m }  \phi\Big( M(w) \Big)  = \phi\Big( M(w) \Big),
    \end{align} for any design measure $w$ over $D_m$. This concludes the proof.

\end{proof}

\begin{proof}[\textbf{Proof of Theorem \ref{thm:momentmat1}}]
Let $D_m$ be the full design that uses all $m!$ permutations exactly once. Let $X_m$ be the model matrix expansion of $D_m$ under Model (\ref{eqn:oofamodel}), with the constraint that $\beta_{m,m-1} = 0$. Let $X_0 = [1,\dots,1]^T$ denote the column of $X_m$ that corresponds to the intercept $\beta_0$. Let $X_{j,k}$ denote the column of $X_m$ that corresponds to the transition effect $\beta_{j,k}$. Let $q = 2{m \choose 2} - 1$. Let $i,j,k,\ell$ be four distinct elements of $\{1,\dots,m\}$. Then, the following are true: \begin{enumerate}
    \item[(a)] $X_0^TX_0 = \sum_{\textbf{a} \in D_m} 1 = m!$
    \item[(b)] $X_0^TX_{j,k} = \sum_{\textbf{a} \in D_m} x_{j,k}(\textbf{a}) = (m-1)!$. The last equality is true because all permutations of $(1,\dots,m)$ where $j$ directly precedes $k$ can be constructed by treating $(j,k)$ as a single component and finding the set of all $(m-1)!$ permutations of the $(m-1)$ components $\{(j,k), a_1, \dots, a_{m-2}\}$, where $a_s \in \{1,\dots,m\}\setminus\{j,k\}, s = 1,\dots,m-2$. 
    \item[(c)] $X_{i,j}^TX_{i,k} = \sum_{\textbf{a} \in D_m} x_{i,j}(\textbf{a})x_{i,k}(\textbf{a}) = 0$ because for any $\textbf{a} \in D_m$, if $i$ directly precedes $j$, then $i$ cannot directly precede $k$; similarly, if $i$ directly precedes $k$, then $i$ cannot directly precede $j$. Therefore, no permutation of $(1,\dots,m)$ will simultaneously yield $x_{i,j}(\textbf{a}) = 1$ and $x_{i,k}(\textbf{a}) = 1$. 
    \item[(d)] $X_{i,j}^TX_{k,j} = \sum_{\textbf{a} \in D_m} x_{i,j}(\textbf{a})x_{k,j}(\textbf{a}) = 0$ because for any $\textbf{a} \in D_m$, if $i$ directly precedes $j$, then $k$ cannot directly precede $j$; similarly, if $k$ directly precedes $j$, then $i$ cannot directly precede $j$. Therefore, no permutation of $(1,\dots,m)$ will simultaneously yield $x_{i,j}(\textbf{a}) = 1$ and $x_{k,j}(\textbf{a}) = 1$. 
    \item[(e)]  $X_{i,j}^TX_{j,i} = \sum_{\textbf{a} \in D_m} x_{i,j}(\textbf{a})x_{j,i}(\textbf{a}) = 0$, because for any $\textbf{a} \in D_m$, $i$ cannot be both directly before and directly after $j$, so there is no permutation that simulatenously satisfies $x_{i,j}(\textbf{a}) = 1$ and $x_{j,i}(\textbf{a}) = 1$. 
    \item[(f)] $X_{i,j}^TX_{k,\ell} = \sum_{\textbf{a} \in D_m} x_{i,j}(\textbf{a})x_{k,\ell}(\textbf{a}) = (m-2)!$. This is true because there are exactly $(m-2)!$ permutations of $(1,2,\dots,m)$ where component $i$ is placed directly before component $j$ and (simultaneously) component $k$ is placed directly before component $\ell$. To see this, notice that the set of all permutations of $(1,\dots,m)$ that place $i$ directly before $j$ and $k$ directly before $\ell$ can be constructed by treating $i$ and $j$ as a single component $(i,j)$, treating $k$ and $\ell$ as a single component $(k,\ell)$, and then finding all permutations of the $m-2$ components $\{(i,j), (k,\ell), a_1, \dots, a_{m-4}\}$, where $a_s \in \{1,\dots,m\}\setminus\{i,j,k,\ell\}$ for $s = 1,\dots,m-4$.
    \item[(g)] $X_{i,j}^TX_{i,j} = \sum_{\textbf{a} \in D_f} [x_{i,j}(\textbf{a})]^2 = \sum_{\textbf{a} \in D_f} x_{i,j}(\textbf{a}) = (m-1)!$. The last equality follows from case (b).
\end{enumerate} Let $V$, $\textbf{1}$, and $I_q$ be as defined in Theorem \ref{thm:momentmat1}. Then, the results from (a) through (g) imply that \begin{align}
    M = X_m^TX_m/m! &= \frac{1}{m!}\begin{bmatrix}
        X_0^TX_0  & X_0^TX_{1,2} & \dots & X_0^TX_{m,m-2} \\
        X_{1,2}^TX_0 & X_{1,2}^TX_{1,2} & \dots & X_{1,2}^TX_{m,m-2} \\
        \vdots & \vdots & \ddots & \vdots \\
        X_{m,m-2}^TX_0 & X_{m,m-2}^TX_{1,2} & \dots & X_{m,m-2}^TX_{m,m-2}
    \end{bmatrix} \\
    &= \frac{1}{m!}\begin{bmatrix}
        m! & (m-1)!\textbf{1}^T \\
        (m-1)!\textbf{1} & (m-1)!I_q + (m-2)!V
    \end{bmatrix} \\
    &= \begin{bmatrix}
        1  & (1/m)\textbf{1}^T \\
        (1/m)\textbf{1}  & [(m-1)!I_q + (m-2)!V]/m!
    \end{bmatrix}
\end{align} which yields the desired result. This completes the proof.
    
\end{proof}

\begin{proof}[\textbf{Proof of Theorem \ref{thm:momentmat2}}]
    Let $D_m$ be the full design that uses all $m!$ permutations exactly once. Let $X_m$ be the model matrix expansion of $D_m$ under Model (\ref{eqn:oofamodel2}), with the constraints that $\beta_{m,m-1} = 0$ and $\delta_{m,m-1} = 0$. Let $q = 2{m \choose 2} - 1$. Partition $X_m = [X_0, X_m^{(1)}, X_m^{(2)}]$, where $X_0 = [1,\dots,1]^T$, $X_m^{(1)}$ is a $m! \times q$ matrix with columns corresponding to length one transition effects, and $X_m^{(2)}$ is a $m! \times q$ matrix with columns corresponding to length two transition effects. Notice that \begin{align}
\label{eqn:momentmat2}
    M = X_m^TX_m/m! = \frac{1}{m!}\begin{bmatrix}
        X_0^T X_0  & X_0^T X_m^{(1)}  & X_0^{T} X_m^{(2)} \\
        X_m^{(1) T} X_0  & \left(X_m^{(1)}\right)^T X_m^{(1)} & \left(X_m^{(1)}\right)^T X_m^{(2)} \\
        X_m^{(2) T} X_0  & \left(X_m^{(2)}\right)^T X_m^{(1)} & \left(X_m^{(2)}\right)^T X_m^{(2)}
    \end{bmatrix}
\end{align} 

From Theorem 1, we know that $X_0^TX_0 = m!$, $X_0^TX_m^{(1)} = (m-1)!\textbf{1}^T$, and $[X_m^{(1)}]^TX_m^{(1)} = (m-1)!I_q + (m-2)!V$. Therefore, we only need to determine the values of $X_0^{T}{} X_m^{(2)}$, $[X_m^{(1)}]^TX_m^{(2)}$, and $[X_m^{(2)}]^TX_m^{(2)}$ to complete the theorem. Let $X_{i,j}^{(1)}$ denote the column of $X_m^{(1)}$ that corresponds to the length one transition effect $\beta_{i,j}$. Let $X_{i,j}^{(2)}$ denote the column of $X_m^{(2)}$ that corresponds to the length two transition effect $\delta_{i,j}$. Let $i,j,k,\ell$ be four distinct elements of $\{1,\dots,m\}$. Then, the following are true: \begin{enumerate}
    \item[(a)] $X_0^TX_{i,j}^{(2)} = \sum_{\textbf{a} \in D_m} x_{i,j}^{(2)}(\textbf{a}) = (m-2)[(m-2)!]$. This is true because, for any permutation $\textbf{a}$, there are $m-2$ ways to place components $i$ and $j$ in $\textbf{a}$ that satisfy $d(i,j,\textbf{a}) = 2$; for each placement of components $i$ and $j$, there are $(m-2)!$ permutations of the remaining components. 
    \item[(b)] $[X_{i,j}^{(1)}]^TX_{i,j}^{(2)} = \sum_{\textbf{a} \in D_m} x_{i,j}(\textbf{a})x_{i,j}^{(2)}(\textbf{a}) = 0$ because for any $\textbf{a} \in D_m$, $d(i,j,\textbf{a})$ cannot be equal to both 1 and 2. Similarly, $[X_{i,j}^{(1)}]^TX_{j,i}^{(2)} = 0$. 
    \item[(c)] $[X_{i,j}^{(1)}]^TX_{i,\ell}^{(2)} = \sum_{\textbf{a} \in D_m} x_{i,j}(\textbf{a})x_{i,\ell}^{(2)}(\textbf{a}) = (m-2)[(m-3)!]$, because $x_{i,j}(\textbf{a}) = x_{i,\ell}^{(2)}(\textbf{a}) = 1$ implies that the sequence $(i,j,\ell)$ occurs in $\textbf{a}$. There are $(m-2)$ possible placements of this sequence. For each placement of $(i,j,\ell)$, there are $(m-3)!$ permutations of the remaining components. 
    \item[(d)] $[X_{i,j}^{(1)}]^TX_{k,j}^{(2)} = \sum_{\textbf{a} \in D_m} x_{i,j}(\textbf{a})x_{k,j}^{(2)}(\textbf{a}) = (m-2)[(m-3)!]$, because $x_{i,j}(\textbf{a}) = x_{k,j}^{(2)}(\textbf{a}) = 1$ implies that the sequence $(k,i,j)$ occurs in $\textbf{a}$. There are $(m-2)$ possible placements of this sequence. For each placement of $(k,i,j)$, there are $(m-3)!$ permutations of the remaining components. 
    \item[(e)] $[X_{i,j}^{(1)}]^TX_{j,\ell}^{(2)} = \sum_{\textbf{a} \in D_m} x_{i,j}(\textbf{a})x_{j,\ell}^{(2)}(\textbf{a}) = (m-3)[(m-3)!].$ Notice that $x_{i,j}(\textbf{a}) = x_{j,\ell}^{(2)}(\textbf{a}) = 1$ implies that the sequence $(i,j,b,\ell)$ is in $\textbf{a}$ for some $b \in \{1,\dots,m\} \setminus \{i,j,\ell\}$. For a given value of $b$, there are $m-3$ ways to place this sequence. For each $b$ and each placement of the sequence, there are $(m-4)!$ permutations of the remaining components. Since there are $m-3$ possible values of $b$, there are $(m-3)[(m-3)(m-4)!] = (m-3)[(m-3)!]$ permutations in $D_m$ that satisfy $x_{i,j}(\textbf{a}) = x_{j,\ell}^{(2)}(\textbf{a}) = 1$.
    \item[(f)] $[X_{i,j}^{(1)}]^TX_{k,i}^{(2)} = \sum_{\textbf{a} \in D_m} x_{i,j}(\textbf{a})x_{k,i}^{(2)}(\textbf{a}) = (m-3)[(m-3)!].$ Notice that $x_{i,j}(\textbf{a}) = x_{k,i}^{(2)}(\textbf{a}) = 1$ implies that the sequence $(k,b,i,j)$ is in $\textbf{a}$ for some $b \in \{1,\dots,m\} \setminus \{i,j,k\}$. For a given value of $b$, there are $m-3$ ways to place this sequence. For each $b$ and each placement of the sequence, there are $(m-4)!$ permutations of the remaining components. Since there are $m-3$ possible values of $b$, there are $(m-3)[(m-3)(m-4)!] = (m-3)[(m-3)!]$ permutations in $D_m$ that satisfy $x_{i,j}(\textbf{a}) = x_{k,j}^{(2)}(\textbf{a}) = 1$.
    \item[(g)] $[X_{i,j}^{(1)}]^TX_{k,\ell}^{(2)} = \sum_{\textbf{a} \in D_m} x_{i,j}(\textbf{a})x_{k,\ell}^{(2)}(\textbf{a}) = (m-4)[(m-3)!].$ Suppose component $i$ is placed directly before component $j$. This leaves $m - 2$ remaining positions to place component $k$. Two of these positions will not leave enough room to place component $\ell$ two positions to the right, so for a given placement of $i$ and $j$, there are only $m-4$ feasible positions for $k$ and $\ell$. There are $m-4$ remaining components to be placed; therefore, for a given position of $(i,j)$, there are $(m-4)!$ permutations in $D_m$ that satisfy $x_{k,\ell}^{(2)}(\textbf{a}) = 1$. Since $k$ and $\ell$ take up two positions, there are $m-3$ ways to place component $i$ and have room to place component $j$ directly after $i$. Therefore, there are a total of $(m-3)(m-4)[(m-4)!] = (m-4)[(m-3)!]$ ways to satisfy $ x_{i,j}(\textbf{a})= x_{k,\ell}^{(2)}(\textbf{a}) = 1$.
    \item[(h)] $[X_{i,j}^{(2)}]^TX_{i,j}^{(2)} = \sum_{\textbf{a} \in D_m} [x_{i,j}^{(2)}(\textbf{a})]^2 =   \sum_{\textbf{a} \in D_m} x_{i,j}^{(2)}(\textbf{a}) = (m-2)[(m-2)!]$ by part (a). 
    \item[(i)] $[X_{i,j}^{(2)}]^TX_{j,i}^{(2)} = \sum_{\textbf{a} \in D_m} x_{i,j}^{(2)}(\textbf{a})x_{j,i}^{(2)}(\textbf{a}) = 0$ because $i$ cannot be both before and after $j$.
    \item[(j)] $[X_{i,j}^{(2)}]^TX_{k,i}^{(2)} = \sum_{\textbf{a} \in D_m} x_{i,j}^{(2)}(\textbf{a})x_{k,i}^{(2)}(\textbf{a}) = (m-4)[(m-3)!].$ In this case, if $ x_{i,j}^{(2)}(\textbf{a}) = x_{j,\ell}^{(2)}(\textbf{a}) = 1$, then the sequence $(k,b,i,c,j)$ must be in $\textbf{a}$ for some components $b,c \in \{1,\dots,m\} \setminus \{i,j,k\}$. For fixed values of $b$ and $c$, there are $m-4$ ways to place this sequence. For each of these placements, there are $(m-5)!$ permutations of the remaining components. There are $m-3$ possible values for $b$ (since $b$ can be in $1,\dots,m$, but $b \neq i,j,k$), and given $b$, there are $m-4$ possible values for $c$. Therefore, $\sum_{\textbf{a} \in D_m} x_{i,j}^{(2)}(\textbf{a})x_{k,i}^{(2)}(\textbf{a}) = (m-3)(m-4)(m-4)[(m-5)!] = (m-4)[(m-3)!]$.
    \item[(k)] $[X_{i,j}^{(2)}]^TX_{j,\ell}^{(2)} = \sum_{\textbf{a} \in D_m} x_{i,j}^{(2)}(\textbf{a})x_{j,\ell}^{(2)}(\textbf{a}) = (m-4)[(m-3)!].$  In this case, if $x_{i,j}^{(2)}(\textbf{a}) = x_{j,\ell}^{(2)}(\textbf{a}) = 1$, then the sequence $(i,b,j,c,\ell)$ must occur in $\textbf{a}$ for some $b,c \in \{1,\dots,m\} \setminus \{ i,j,\ell \}$. By the same counting argument in part (j), the result follows.
    \item[($\ell$)]  $[X_{i,j}^{(2)}]^TX_{k,\ell}^{(2)} = \sum_{\textbf{a} \in D_m} x_{i,j}^{(2)}(\textbf{a})x_{k,\ell}^{(2)}(\textbf{a}) = [(m-6)(m-5) + 4(m-4)][(m-4)!]$. This is true because there are $m-2$ possible positions to place component $i$ so that component $j$ may be placed two spaces to the right of $i$. Of these $m-2$ possibilities, consider the following four placements of $i$ and $j$, \begin{align*}
        &i, a_1, j, a_2, \dots, a_{m-2} \\
        &a_1, i, a_2, j, \dots, a_{m-2} \\
        &a_1, a_2, \dots, a_{m-4}, j, a_{m-3}, i, a_{m-2} \\
        &a_1, a_2, \dots, a_{m-3}, j, a_{m-2}, i, 
    \end{align*} where $a_1, \dots, a_{m-2}$ represent available positions. In these four sequences, if component $k$ is placed in one of the first $m-2-2 = m - 4$ available positions, then there is space to place component $\ell$ two spaces to the right of it. Once the 4 components $i, j, k, \ell$ have been placed, there are $(m-4)!$ permutations of the remaining $m-4$ components. 

    In the remaining $m - 2 - 4 = m - 6$ possible positions of $i$ and $j$, there are only $m - 5$ ways to place $k$ so that $j$ is exactly two spaces to the right of it (as $i$ or $j$ will take up one of the remaining placements). Once the 4 components $i, j, k, \ell$ have been placed, there are $(m-4)!$ permutations of the remaining $m-4$ components.  

    Overall, these statements imply that \begin{align}
        [X_{i,j}^{(2)}]^TX_{k,\ell}^{(2)} &= 4(m-4)[(m-4)!] + (m-6)(m-5)[(m-4)!] \\ &= [(m-6)(m-5) + 4(m-4)][(m-4)!].
    \end{align}
\end{enumerate} The equation in (a) shows that $X_0^{T}{} X_m^{(2)} = (m-2)[(m-2)!]\textbf{1}_q$. The equations in (b) to (g) show that $[X_m^{(1)}]^TX_m^{(2)} = [(m-3)!]Q$. The remaining equations show that $[X_m^{(2)}]^TX_m^{(2)} = (m-2)(m-2)!I_q + R$. It follows from (\ref{eqn:momentmat2}) that \begin{align}
    M = \begin{bmatrix}
        1  & (1/m)\textbf{1}^T & \frac{m-2}{m(m-1)}\textbf{1}^T  \\
        (1/m)\textbf{1}  & [(m-1)!I_q + (m-2)!V]/m! & [(m-3)!/m!]Q \\
        \frac{m-2}{m(m-1)}\textbf{1} & [(m-3)!/m!]Q^T   & [(m-2)(m-2)!I_q + R]/m!
    \end{bmatrix},
\end{align} which completes the proof. 
\end{proof}

\begin{proof}[\textbf{Proof of Corollary \ref{cor:momentmat3}}]
    Suppose that the $m$ components are arranged into $c$ blocks $b_1, \dots, b_c$ such that if $i < j$, then all components in $b_i$ must precede all components in $b_j$. As stated in Corollary \ref{cor:momentmat3}, label the components so that components $1,\dots,n_1$ belong to $b_1$, components $n_{i-1} + 1, \dots, n_i$ belong to block $b_i$ for $i = 2,\dots,c-1$, and $n_{c-1} + 1, \dots, m$ belong to the final block $b_c$. Let $D_m$ be the full design that uses each of the $N = \prod_{i=1}^c (m_i)!$ feasible permutations exactly once, where $m_i$ is the number of components in block $b_i$, $i = 1,\dots,c$. Let $q_i = 2{m_i \choose 2} - 1$, and let $V_i$ be a $q_i \times q_i$ matrix that is defined as in Theorem \ref{thm:momentmat1}, and let $I_{q_i}$ denote the identity matrix of dimension $q_i$. Let $\textbf{1}$ be a conformable vector of ones, and $\textbf{0}$ be a $q_i \times q_i$ matrix of zeroes. Let $\textbf{1}_{q_i \times q_j}$ be a $q_i \times q_j$ matrix of ones.

    Partition $X_m = [\mathbf{1}, X_1, X_2, \dots, X_c]$ where $X_i$ is an $N \times q_i$ matrix whose columns correspond to the $q_i$ length one transitions between components that are exclusively in block $b_i$, $i = 1,\dots,c$. It follows that the moment matrix is \begin{align}
        \label{eqn:cormomentmat1}
        M = X_m^TX_m/N = \frac{1}{N}\begin{bmatrix}
        \textbf{1}^T\textbf{1} & \textbf{1}^TX_1 & \textbf{1}^TX_2 & \dots &\textbf{1}^TX_c \\
        X_1^T\textbf{1} & X_1^TX_1 & X_1^TX_2 & \dots & X_1^TX_c \\
        X_2^T\textbf{1} & X_2^TX_1 & X_2^TX_2 & \dots & X_2^TX_c \\
        \vdots & \vdots & \vdots & \ddots & \vdots \\
        X_c^T\textbf{1} & X_c^TX_1 & X_c^TX_2 & \dots & X_c^TX_c 
    \end{bmatrix}.
    \end{align} 
    By Theorem \ref{thm:momentmat1}, we have that $X_i^TX_i = [(m_i -1)!I_{q_i} + (m_i - 2)!V_i]$ for $i = 1,\dots,c$. Notice that for components $j$ and $k$ belonging to the same block $b_i$, we have that \begin{align*}
        \frac{1}{N}\sum_{\textbf{a} \in D_m} x_{j,k}(\textbf{a}) &= \frac{(m_i - 1)!\prod_{\ell = 1,\ell \neq i}^c(m_{\ell}!)}{N} = \frac{(m_i - 1)!\prod_{\ell = 1,\ell \neq i}^c(m_{\ell}!)}{\prod_{\ell = 1}^c m_{\ell}!} \\
        &= \frac{(m_i-1)!}{m_i!} = \frac{1}{m_i},
    \end{align*} since $j$ appears directly in front of $k$ exactly $(m_i - 1)!$ times by item (b) of Theorem \ref{thm:momentmat1}, and there are $\prod_{\ell = 1,\ell \neq i}^c(m_{\ell}!)$ possible permutations of the other $c-1$ blocks. It follows that $\textbf{1}^TX_i/N = (1/m_i)\textbf{1}^T$ for $i = 1,\dots,c$. Similarly, it follows that for $i \neq j$, $X_i^TX_j/N = (m_i-1)!(m_j-1)!\textbf{1}_{q_i \times q_j}/N = \frac{1}{m_im_j}\textbf{1}_{q_i \times q_j}$. Finally, notice that $\textbf{1}^T\textbf{1} = N$.  Substituting these values into the matrix in (\ref{eqn:cormomentmat1}) yields \begin{align}
         X_m^TX_m/N = \begin{bmatrix}
        1  & (1/m_1)\textbf{1}^T &  (1/m_2)\textbf{1}^T  & \dots & (1/m_c)\textbf{1}^T  \\
        (1/m_1)\textbf{1}  & M_1 & \frac{1}{m_1m_2}\textbf{1}_{q_1 \times q_2} & \dots & \frac{1}{m_1m_c}\textbf{1}_{q_1 \times q_c} \\
        (1/m_2)\textbf{1}  & \frac{1}{m_2m_1}\textbf{1}_{q_2 \times q_1}  & M_2 & \dots, & \frac{1}{m_2m_c}\textbf{1}_{q_2 \times q_c}\\
        \vdots & \vdots & \vdots & \ddots & \vdots \\
        (1/m_c)\textbf{1}  & \frac{1}{m_cm_1}\textbf{1}_{q_c \times q_1} & \frac{1}{m_cm_2}\textbf{1}_{q_c \times q_2} & \dots & M_c
    \end{bmatrix},
    \end{align} where $M_i = [(m_i -1)!I_{q_i} + (m_i - 2)!V_i]/N$ for $i = 1,\dots,c$. This completes the proof.
    
\end{proof}

\end{document}